\theoremstyle{plain}
\newtheorem{theorem}{Theorem}[section]                                          
\newtheorem{proposition}[theorem]{Proposition}
\theoremstyle{definition}
\newtheorem{definition}[theorem]{Definition}
\theoremstyle{remark}
\newtheorem{remark}[theorem]{Remark}
\newtheorem{examples}[theorem]{Examples}
\DeclareMathOperator{\p}{P}
\title{Benford or not Benford: new results on digits beyond the first}
\begin{document}

\maketitle

\begin{abstract}
In this paper, we will see that the proportion of $d$ as $p^{\text{th}}$ digit, where $p>1$ and $d\in\llbracket0,9\rrbracket$, in data (obtained thanks 
to the hereunder developed model) is more likely to follow a law whose probability distribution is determined by a specific upper bound, rather than the 
generalization of Benford's Law to digits beyond the first one. These probability distributions fluctuate around theoretical values determined by Hill 
in $1995$. Knowing beforehand the value of the upper bound can be a way to find a better adjusted law than Hill's one. 
\end{abstract}

\section*{Introduction}

Benford's Law is really amazing: according to it, the first digit $d$, $d\in\llbracket1,9\rrbracket$, of 
numbers in many naturally occurring collections of data does not follow a discrete uniform distribution; it rather follows a logarithmic 
distribution. Having been discovered by Newcomb in $1881$ (\cite{new}), this law was definitively brought to light by Benford in 
$1938$ (\cite{ben}). He proposed the following probability distribution: the probability for $d$ to be the first digit of a number is equal to 
$\log(1+\frac{1}{d})$. Most of the empirical data, as physical data (Knuth in \cite{knu} or Burke and Kincanon in \cite{BK}), demographic and economic 
data (Nigrini and Wood in \cite{NW}) or genome data (Friar \textit{et al.} in \cite{FGP}), follow approximately Benford's Law. To such an extent that 
this law is used to detect possible frauds in lists of socio-economic data (\cite{var}) or in scientific publications (\cite{die}). 

In \cite{blo}, Blondeau Da Silva, building a rather relevant representative model, showed that, in this case, the proportion of each $d$ as leading 
digit, $d\in\llbracket0,9\rrbracket$, structurally fluctuates. It strengthens the fact that, concerning empirical data sets, this law often appears to be a good 
approximation of the reality, but no more than an approximation (\cite{del}). We can note that there also exist distributions known to disobey 
Benford's Law (\cite{rai} and \cite{bee}). 

Generalizing Benford's Law, Hill (\cite{hit}) extends the law to digits beyond the first one: the probability for $d$, $d\in\llbracket0,9\rrbracket$, to 
be the $p^{\text{th}}$ digit of a number is equal to $\sum_{j=10^{p-2}}^{10^{p-1}-1}\log(1+\frac{1}{10j+d})$.

Building a very similar model to that described in \cite{blo}, the naturally occurring data will be considered as the realizations of independant random 
variables following the hereinafter constraints: $(a)$ the data is strictly positive and is upper-bounded by an integer $n$, constraint which is often valid 
in data sets, the physical, biological and economical quantities being limited ; $(b)$ each random variable is considered to follow a discrete uniform 
distribution whereby the first strictly positive $p$-digits integers ($p>1$) are equally likely to occur ($i$ being uniformly randomly selected in 
$\llbracket10^{p-1},n\rrbracket$). This model relies on the fact that the random variables are not always the same. 

Through this article we will demonstrate that the predominance of $0$ over $1$ (and of $1$ over $2$, and so on), as $p^{\text{th}}$, ($p>1$) digit is 
all but surprising. Hill's probabilities became standard values that should exactly be followed by most of naturally occurring collections of data. 
However the reality is that the proportion of each $d$ as leading digit structurally fluctuates. There is not a single law but numerous distinct laws 
that we will hereafter examine.

\section{Notations and probability space}

Let $p$ and $d$ be two strictly positive integers such that $p>1$ and $d\in\llbracket0,9\rrbracket$. Let $m$ be a strictly positive integer such that 
$m\ge10^{p-1}$. Let $\mathcal{U}\{10^{p-1},m\}$ denote the discrete uniform distribution whereby integers between $10^{p-1}$ and $m$ are equally likely 
to be observed.

Let $n$ be a strictly positive integer such that $n\ge 10^{p-1}$. Let us consider the random experiment $\mathcal{E}_n$ of tossing two independent dice. 
The first one is a fair $(n+1-10^{p-1})$-sided die showing $n+1-10^{p-1}$ different numbers from $1$ to $n+1-10^{p-1}$. The number $i$ rolled on it 
defines the number of faces on the second die. It thus shows $i$ different numbers from $10^{p-1}$ to $i+10^{p-1}-1$. 

Let us now define the probability space $\Omega_n$ as follows: $\Omega_n=\{(i,j):i\in\llbracket1,n+1-10^{p-1}\rrbracket\text{ and }
j\in\llbracket10^{p-1},i+10^{p-1}-1\rrbracket\}$. Our probability measure is denoted by $\p$. 

Let us denote by $D_{(n,p)}$ the random variable from $\Omega_n$ to $\llbracket0,9\rrbracket$ that maps each element $\omega$ of $\Omega_n$ to the 
$p^{\text{th}}$ digit of the second component of $\omega$.

As our aim is to determine the probability that the $p^{\text{th}}$ digit of the integer obtained thanks to the second throw is $d$, it can be 
considered with no consequence on our results that we first select an integer $i$ equal to or less than $n$ among at least $p$-digits integers 
(following the $\mathcal{U}\{10^{p-1},n\}$ discrete uniform distribution); afterwards we select an other at least $p$-digits integer equal to or less 
than $i$ (following the $\mathcal{U}\{10^{p-1},i\}$ discrete uniform distribution).

\section{Proportion of \texorpdfstring{$d$}{}}

Through the below proposition, we will express the value of $\p(D_{(n,p)}=d)$ \textit{i.e.} the probability that the $p^{\text{th}}$ digit of our second 
throw in the random experiment $\mathcal{E}_n$ is $d$.

\begin{proposition}\label{proi}Let $k$ denote the integer such that:
\begin{center}
$k=\max\{i\in\mathbb N:10^{i+p}\le n\}\,$.
\end{center}
Let $l$ denote the positive integer such that:
\begin{center}
$l=\lfloor\frac{n-(10^{p-1}+d)10^{k+1}}{10^{k+2}}\rfloor+10^{p-2}\,$.
\end{center}
The value of $\p(D_{(n,p)}=d)$ is:
\begin{align*}
\frac{1}{n+1-10^{p-1}}\Big(&\sum_{i=0}^k\big(\sum_{j=10^{p-2}}^{10^{p-1}-1}\sum_{b=(10j+d)10^i}^{(10j+(d+1))10^i-1}\frac{b-((9j+d)10^i+10^{p-2}-1)}
{b+1-10^{p-1}}\\
+&\sum_{j=10^{p-2}-1}^{10^{p-1}-1}\sum_{a=\max(10^{p+i-1},(10j+(d+1))10^i)}^{\min(10^{p+i}-1,(10(j+1)+d)10^i-1)}\frac{10^i(j+1)-10^{p-2}}{a+1-10^{p-1}}
\big)\\
+&\, r_{(n,d,p)}\Big)\,,
\end{align*}
where $r_{(n,d,p)}$ is, if the $p^{\text{th}}$ digit of $n$ is $d$:
\begin{align*}
&\sum_{j=10^{p-2}}^{l}\sum_{b=(10j+d)10^{k+1}}^{\min(n,(10j+(d+1))10^{k+1}-1)}\frac{b-((9j+d)10^{k+1}+10^{p-2}-1)}
{b+1-10^{p-1}}\\
+&\sum_{j=10^{p-2}-1}^{l-1}\sum_{a=\max(10^{p+k},(10j+(d+1))10^{k+1})}^{(10(j+1)+d)10^{k+1}-1}\frac{10^{k+1}(j+1)-10^{p-2}}{a+1-10^{p-1}}\,,
\end{align*}
or where $r_{(n,d,p)}$ is, if the $p^{\text{th}}$ digit of $n$ is all but $d$:
\begin{align*}
&\sum_{j=10^{p-2}}^{l}\sum_{b=(10j+d)10^{k+1}}^{(10j+(d+1))10^{k+1}-1}\frac{b-((9j+d)10^{k+1}+10^{p-2}-1)}{b+1-10^{p-1}}\\
+&\sum_{j=10^{p-2}-1}^{l}\sum_{a=\max(10^{p+k},(10j+(d+1))10^{k+1})}^{\min(n,(10(j+1)+d)10^{k+1}-1)}\frac{10^{k+1}(j+1)-10^{p-2}}{a+1-10^{p-1}}\,.
\end{align*}
\end{proposition}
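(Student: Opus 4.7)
The natural starting point is the equivalent description of the experiment $\mathcal{E}_n$ given at the end of Section~1, in which one first draws $i$ uniformly in $\llbracket 10^{p-1},n\rrbracket$ and then $j$ uniformly in $\llbracket 10^{p-1},i\rrbracket$. Writing $\nu(m)$ for the cardinality of $\{j\in\llbracket 10^{p-1},m\rrbracket:\text{ the }p^{\text{th}}\text{ digit of }j\text{ is }d\}$ and conditioning on the first draw, this immediately yields
\[\p(D_{(n,p)}=d)=\frac{1}{n+1-10^{p-1}}\sum_{m=10^{p-1}}^{n}\frac{\nu(m)}{m+1-10^{p-1}}.\]
Everything else in the proof is a careful evaluation of $\nu(m)$ and a matching partition of this master sum.

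I would then classify each $m$ by its number of digits, written $p+i$, by the integer $j\in\llbracket 10^{p-2},10^{p-1}-1\rrbracket$ formed by its first $p-1$ digits, and by the value of its $p^{\text{th}}$ digit. When this $p^{\text{th}}$ digit equals $d$ we have $m=(10j+d)10^i+r$ with $0\le r<10^i$; summing the counts of $q$-digit integers with $p^{\text{th}}$ digit $d$ over $q=p,\ldots,p+i-1$ (each equal to $9\cdot 10^{q-2}$) and then the $(p+i)$-digit ones not exceeding $m$ (equal to $(j-10^{p-2})10^i+r+1$) gives
\[\nu(m)=j\cdot 10^i+r+1-10^{p-2}=m-\bigl((9j+d)10^i+10^{p-2}-1\bigr),\]
which is exactly the numerator of the first inner sum of the proposition. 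When $m$ instead lies in a \emph{gap} $\llbracket(10j+d+1)10^i,(10(j+1)+d)10^i-1\rrbracket$, no new integer with $p^{\text{th}}$ digit $d$ is encountered, so $\nu$ is constant on this interval and equals its value at the top of the preceding $d$-block, namely $10^i(j+1)-10^{p-2}$; this matches the numerator of the second inner sum.

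Plugging these values back into the master sum, I would split $\llbracket 10^{p-1},n\rrbracket$ by number of digits: the $(p+i)$-digit integers for $i=0,\ldots,k$ exactly fill $\llbracket 10^{p-1},10^{p+k}-1\rrbracket$ by the very definition of $k$, while the remaining range $\llbracket 10^{p+k},n\rrbracket$ consists of $(p+k+1)$-digit integers. Inside each range of $(p+i)$-digit integers, the $d$-blocks (indexed by $j\in\llbracket 10^{p-2},10^{p-1}-1\rrbracket$) and gaps (indexed by $j\in\llbracket 10^{p-2}-1,10^{p-1}-1\rrbracket$, the extra lower value accounting for the leftmost gap, which is non-empty as soon as $d\ge 1$) alternate and tile the range, the $\max$ and $\min$ in the bounds of the gap sum merely clipping the extremal $j$-values back to $\llbracket 10^{p+i-1},10^{p+i}-1\rrbracket$; this reproduces the main double sum. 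For the terminal partial range $\llbracket 10^{p+k},n\rrbracket$ the same decomposition applies, but one must stop at $n$: the integer $l$ is defined precisely to be the last $j$ whose $d$-block within this range starts at or before $n$, and the two displayed forms of $r_{(n,d,p)}$ correspond to whether $n$ falls inside a $d$-block (truncate the inner $b$-sum by $\min(n,\cdot)$, first form) or inside a gap (include the full last $d$-block and truncate the inner $a$-sum by $\min(n,\cdot)$, second form).

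The mathematical content is essentially a counting argument; the only genuine obstacle is the final bookkeeping step, namely verifying that the value of $l$ and the various $\min$/$\max$ cutoffs together cover each $(p+k+1)$-digit integer in $\llbracket 10^{p+k},n\rrbracket$ exactly once, consistently in both of the cases distinguished by the $p^{\text{th}}$ digit of $n$.
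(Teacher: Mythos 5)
Your proposal is correct and follows essentially the same route as the paper: after the change of variable $m=q+10^{p-1}-1$ your master sum is exactly the paper's application of the law of total probability over the first throw, and your evaluation of $\nu(m)$ (complete $9\times10^{p-2}$ blocks per digit-length plus the partial block or constant gap value) reproduces the paper's two conditional-probability formulas. The only difference is cosmetic — you parametrize the partial count by $(j,r)$ where the paper uses a count $t$ of completed sequences — and both treatments leave the final reindexing into the stated $k$, $l$, $r_{(n,d,p)}$ form as routine bookkeeping.
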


\begin{proof}Let us denote by $F_{(n,p)}$ the random variable from $\Omega_n$ to $\llbracket1,n+1-10^{p-1}\rrbracket$ that maps each element $\omega$ of 
$\Omega_n$ to the first component of $\omega$. It returns the number obtained on the first throw of the unbiased $(n+1-10^{p-1})$-sided die. For each 
$q\in\llbracket1,n+1-10^{p-1}\rrbracket$, we have:
\begin{align}\label{equi}
\p(F_{(n,p)}=q)=\frac{1}{n+1-10^{p-1}}\,.
\end{align}

According to the Law of total probability, we state:
\begin{align}\label{espi}
\p(D_{(n,p)}=d)=\sum_{q=1}^{n+1-10^{p-1}}\p(D_{(n,p)}=d|F_{(n,p)}=q)\p(F_{(n,p)}=q)\,.
\end{align}

Thereupon two cases appear in determining the value, for $q\in\llbracket1,n+1-10^{p-1}\rrbracket$, of $\p(D_{(n,p)}=d|F_{(n,p)}=q)$. Let $k_q$ be the 
integer such that $k_q=\max\{k\in\mathbb N:10^{p+k}\le q+10^{p-1}-1\}$ in both cases. 

Let us study the first case where the $p^{\text{th}}$ digit of $q+10^{p-1}-1$ is $d$. For all $i$ in $\llbracket0,k_q\rrbracket$, there exist 
$9\times10^{p-2}$ sequences of $10^i$ consecutive integers from $(10j+d)10^i$ to $(10j+(d+1))10^i-1$, where $j\in\llbracket10^{p-2},10^{p-1}-1
\rrbracket$, whose $p^{\text{th}}$ digit is $d$. The higher of these integers is $(10(10^{p-1}-1)+(d+1))10^{k_q}-1$, the last $(p+k_q)$-digit number 
in this case. Thus, from $10^{p-1}$ to $10^{p+k_q}-1$, the number of integers whose $p^{\text{th}}$ digit is $d$ is:
\begin{align*}
\sum_{i=0}^{k_q}\sum_{j=10^{p-2}}^{10^{p-1}-1}\sum_{(10j+d)10^i}^{(10j+(d+1))10^i-1}1=9\times10^{p-2}\sum_{i=0}^{k_q}10^i=10^{p-2}(10^{k_q+1}-1)\,.
\end{align*}
This equality still holds true for $k_q=-1$. Such types of sum would be considered null in the rest of the article. From $10^{p+k_q}$ to $q+10^{p-1}-1$, 
there exist $t$ sequences of $10^{k_q+1}$ consecutive integers from $(10j+d)10^{k_q+1}$ to $(10j+(d+1))10^{k_q+1}-1$, where $j\in\llbracket10^{p-2},
10^{p-2}+t-1\rrbracket$, whose $p^{\text{th}}$ digit is $d$. There also exist $q+10^{p-1}-1-(10(10^{p-2}+t)+d)10^{k_q+1}+1$ additional integers in this 
case between $(10(10^{p-2}+t)+d)10^{k_q+1}$ and $q+10^{p-1}-1$. Finally the total number of integers whose $p^{\text{th}}$ digit is $d$ is:
\begin{center}
$10^{p-2}(10^{k_q+1}-1)+t\times10^{k_q+1}+q+10^{p-1}-1-(10(10^{p-2}+t)+d)10^{k_q+1}+1$\\
\text{i.e.}\quad $q+10^{p-1}-1-\Big(\big(9(10^{p-2}+t)+d\big)10^{k_q+1}-1\Big)\,$.
\end{center}
It may be inferred that:
\begin{equation}\label{p1i}
\p(D_{(n,p)}=d|F_{(n,p)}=q)=\frac{q+10^{p-1}-1-\Big(\big(9(10^{p-2}+t)+d\big)10^{k_q+1}-1\Big)}{q}\,,
\end{equation}
the $p^{\text{th}}$ digit of $q+10^{p-1}-1$ being here $d$.

In the second case, we consider the integers $q+10^{p-1}-1$ whose $p^{\text{th}}$ digits are different from $d$. On the basis of the previous case, the 
total number of integers whose $p^{\text{th}}$ digit is $d$ is, where $t$ is the number of sequences of consecutive integers lower than $q+10^{p-1}-1$:
\begin{center}
$10^{p-2}(10^{k_q+1}-1)+t\times10^{k_q+1}$\\
\text{i.e.}\quad $10^{k_q+1}(10^{p-2}+t)-10^{p-2}\,$.
\end{center}
It can be concluded that:
\begin{equation}\label{p2i}
\p(D_{(n,p)}=d|F_{(n,p)}=q)=\frac{10^{k_q+1}(10^{p-2}+t)-10^{p-2}}{q}\,,
\end{equation}
the $p^{\text{th}}$ digit of $q+10^{p-1}-1$ being here different from $d$.\\
Using equalities (\ref{equi}), (\ref{espi}), (\ref{p1i}) and (\ref{p2i}), we get our result.
\end{proof}

For example, we get:
\begin{examples}
Let us first determine the value of $\p(D_{(10003,5)}=2)$. The probability that the fifth digit of a randomly selected number in 
$\llbracket10000,10000\rrbracket$ is $2$ is $\frac{0}{1}$, those in $\llbracket10000,10001\rrbracket$ is $\frac{0}{2}$, those in 
$\llbracket10000,10002\rrbracket$ is $\frac{1}{3}$ and those in $\llbracket10000,10003\rrbracket$ is $\frac{1}{4}$. Hence we have:
\begin{align*}
\p(D_{(10003,5)}=2)=\frac{1}{4}\Big(\frac{0}{1}+\frac{0}{2}+\frac{1}{3}+\frac{1}{4}\Big)\approx0.1458\,.
\end{align*}
It is the second case of Proposition \ref{proi}, where $n=10003$, $d=2$, $p=5$, $k=-1$ and $l=1000$.\\
Let us now determine the value of $\p(D_{(1113,3)}=1)$ (first case of Proposition \ref{proi}); in this case, we have $k=0$ and $l=11$.
\begingroup\scriptsize
\begin{align*}
\p(D_{(1113,3)}=1)&=\frac{1}{1014}\Big(\sum_{j=10}^{99}\frac{j-9}{10j-98}+\sum_{j=10}^{98}\sum_{a=10j+2}^{10(j+1)}\frac{j-9}{a-99}+
\sum_{a=992}^{999}\frac{90}{a-99}+\sum_{a=1000}^{1009}\frac{90}{a-99}\\
&+\sum_{b=1010}^{1019}\frac{b-919}{b-99}+\sum_{a=1020}^{1109}\frac{100}{a-99}+\sum_{b=1110}^{1113}\frac{b-1009}{b-99}\Big)\\
&=\frac{1}{1014}\Big(\frac{1}{2}+\frac{1}{3}+\frac{1}{4}+...+\frac{1}{11}+\frac{2}{12}+\frac{2}{13}+...+\frac{89}{891}+\frac{90}{892}+\frac{90}{893}+
...+\frac{90}{910}\\
&+\frac{91}{911}+...+\frac{100}{920}+\frac{100}{921}+...+\frac{100}{1010}+\frac{101}{1011}+...+\frac{104}{1014}\Big)\\
&\approx0.1028\,.
\end{align*}
\endgroup
Let us determine the value of $\p(D_{(212,2)}=9)$ (second case of Proposition \ref{proi}); in this case, we have $k=0$ and $l=1$.
\begingroup\scriptsize
\begin{align*}
\p(D_{(212,2)}=9)&=\frac{1}{203}\Big(\frac{9}{10}+\sum_{j=1}^{8}\sum_{a=10(j+1)}^{10(j+1)+8}\frac{j}{a-9}+
\sum_{a=100}^{189}\frac{9}{a-9}+\sum_{b=190}^{199}\frac{b-180}{b-9}+\sum_{a=200}^{212}\frac{19}{a-9}\Big)\\
&=\frac{1}{203}\Big(\frac{1}{10}+\frac{1}{11}+...+\frac{1}{19}+\frac{2}{20}+\frac{2}{21}+...+\frac{8}{89}+\frac{9}{90}+\frac{9}{91}+...+\frac{9}{180}+
\frac{10}{181}\\
&+\frac{11}{182}+...+\frac{19}{190}+\frac{19}{191}+...+\frac{19}{203}\Big)\\
&\approx0.0759\,.
\end{align*}
\endgroup
\end{examples}

\section{Study of a particular subsequence}

It is natural that we take a specific look at the values of $n$ positioned one rank before the integers for which the number of digits has just increased.

To this end we will consider the sequence $\big(\p(D_{n,p}=d)\big)_{n\in\mathbb N\setminus\llbracket0,10^{p-1}-1\rrbracket}$. In the interests of 
simplifying notation, we will denote by $(P_{(d,n,p)})_{n\in\mathbb N\setminus\llbracket0,10^{p-1}-1\rrbracket}$ this sequence. Let us study the 
subsequence $(P_{(d,\phi_{(d,p)}(n),p)})_{n\in\mathbb N\setminus\llbracket0,p-1\rrbracket}$ where $\phi_{(d,p)}$ is the function from 
$\mathbb N\setminus\llbracket0,p-1\rrbracket$ to $\mathbb N$ that maps $n$ to $10^n-1$. We get the below result:
\begin{proposition}\label{sub1i}
The subsequence $(P_{(d,\phi_{(d,p)}(n),p)})_{n\in\mathbb N\setminus\llbracket0,p-1\rrbracket}$ converges to:
\begingroup\scriptsize
\begin{align*}
10^{-1}+\frac{n_{(d,p)}+m_{(d,p)}-9l_{(d,p)}-d\times k_{(d,p)}}{9\times10^{p-1}}+\frac{1}{90}\ln(\frac{10^{p-1}+d}{10^{p-1}})
+\frac{1}{9}\ln(\frac{10^{p}}{10^{p}-10+d+1})\,,
\end{align*}
\endgroup
where:
\[\left \{
\begin{array}{l @{=} l}
    k_{(d,p)} & \sum_{j=10^{p-2}}^{10^{p-1}-1}\ln(\frac{10j+(d+1)}{10j+d}) \\
    l_{(d,p)} & \sum_{j=10^{p-2}}^{10^{p-1}-1}j\ln(\frac{10j+(d+1)}{10j+d}) \\
    m_{(d,p)} & \sum_{j=10^{p-2}}^{10^{p-1}-2}\ln(\frac{10(j+1)+d}{10j+(d+1)}) \\
    n_{(d,p)} & \sum_{j=10^{p-2}}^{10^{p-1}-2}j\ln(\frac{10(j+1)+d}{10j+(d+1)})\,.
\end{array}
\right.\]
\end{proposition}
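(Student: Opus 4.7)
The plan is to substitute $n = 10^N - 1$ into Proposition~\ref{proi} and let $N\to\infty$. A first short computation shows that $k = N-p-1$ and, for $N$ large enough, $l = 10^{p-1}-1$. Since the $p^{\text{th}}$ digit of $10^N-1$ is $9$, the first case of $r_{(n,d,p)}$ is invoked when $d=9$ and the second case otherwise; the two cases differ only by a single boundary term of size $O(10^{-N})$ after normalization, so I would treat them uniformly.

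Next I would extract the asymptotic behaviour of the double sum $\sum_{i=0}^k(\cdots)$. In each inner $b$-sum the algebraic identity
\[
\frac{b-\big((9j+d)10^i+10^{p-2}-1\big)}{b+1-10^{p-1}} = 1-\frac{(9j+d)10^i-9\cdot 10^{p-2}}{b+1-10^{p-1}}
\]
splits the summand into a counting piece and a reciprocal piece. The counting piece sums to $10^i$; the harmonic piece, via the classical estimate $\sum_{b=A}^{B}\frac{1}{b+1-10^{p-1}}=\ln(\frac{B+1-10^{p-1}}{A-10^{p-1}})+O(1/A)$ with $A=(10j+d)10^i$, converges as $i\to\infty$ to $\ln(\frac{10j+d+1}{10j+d})$. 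Analogous manipulations on the $a$-sums (where the $p^{\text{th}}$ digit is not $d$) produce $\ln(\frac{10(j+1)+d}{10j+d+1})$ factors. Summing over $j$ then yields precisely the four quantities $k_{(d,p)}, l_{(d,p)}, m_{(d,p)}, n_{(d,p)}$.

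Summing over $i\in\llbracket 0,k\rrbracket$, the counting contribution is geometric and equals $(10^{k+1}-1)/9 = (10^{N-p}-1)/9$; each logarithmic contribution, weighted by $(9j+d)10^i$ or $10^i(j+1)$, scales identically. Dividing by $n+1-10^{p-1}=10^N-10^{p-1}$, the leading geometric piece yields the $10^{-1}$ term (the factor $\sum_j 1 = 9\cdot 10^{p-2}$ collapsing the prefactor), and the logarithmic pieces combine to give the $\frac{n_{(d,p)}+m_{(d,p)}-9 l_{(d,p)}-d\,k_{(d,p)}}{9\cdot 10^{p-1}}$ correction. The term $r_{(n,d,p)}$ covers the top decade $[10^{N-1},10^N-1]$: the counting part of $r$ is absorbed into the main bulk, but two boundary effects survive. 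The first summand of $r$ contributes, through its partial initial cell at $j=10^{p-2}$, the residual $\frac{1}{90}\ln(\frac{10^{p-1}+d}{10^{p-1}})$; the second summand, whose harmonic pieces telescope over $j\in\llbracket 10^{p-2}-1,10^{p-1}-1\rrbracket$, produces $\frac{1}{9}\ln(\frac{10^p}{10^p-10+d+1})$.

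The principal obstacle is bookkeeping rather than any single hard step. Proposition~\ref{proi} contains two cases (by the $p^{\text{th}}$ digit of $n$), each with a main double sum, a partial remainder $r_{(n,d,p)}$ with $\max/\min$ clipping, and four index sets; pairing every harmonic-to-log approximation with the right summation constant $k_{(d,p)},\ldots,n_{(d,p)}$ while correctly tracking the boundary $j$-indices (in particular the shift $10^{p-2}-1$ in the $a$-sums) is the delicate step. A secondary subtlety is the uniform control of the $O(1/A)$ remainders: multiplied by $(9j+d)10^i$ and summed over $i\in\llbracket 0,k\rrbracket$, they contribute at most $O(N)$, which after division by $10^N-10^{p-1}$ vanishes as $N\to\infty$.
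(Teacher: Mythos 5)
Your overall strategy --- substitute $n=10^N-1$, split each summand of Proposition \ref{proi} into a counting piece and a harmonic piece, convert harmonic sums to logarithms via $\ln(\frac{q+1}{p})\le\sum_{k=p}^q\frac1k\le\ln(\frac{q}{p-1})$, and sum the resulting geometric series --- is exactly the route the paper takes (the paper merely packages it as a two-sided squeeze, with explicit lower and upper bounds for $a_{(i,d,p)}$ and $b_{(i,d,p)}$ and an error term $\epsilon_{(d,n,p)}\to0$). Your computations of $k=N-p-1$, of $l=10^{p-1}-1$, the algebraic identity for the $b$-summand, and the control of the $O(1/A)$ remainders are all sound.

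There is, however, a concrete error in your final bookkeeping, precisely at the step you yourself flag as delicate. The two residual terms $\frac{1}{90}\ln(\frac{10^{p-1}+d}{10^{p-1}})$ and $\frac{1}{9}\ln(\frac{10^{p}}{10^{p}-10+d+1})$ do \emph{not} come from the top-decade remainder $r_{(n,d,p)}$. They come from the two edge values $j=10^{p-2}-1$ and $j=10^{p-1}-1$ of the $a$-sum, which are present in \emph{every} decade $i$: for $j=10^{p-2}-1$ the clipped range is $\llbracket 10^{p+i-1},(10^{p-1}+d)10^i-1\rrbracket$ with weight $10^{p-2+i}-10^{p-2}$, giving $\ln(\frac{10^{p-1}+d}{10^{p-1}})$, and for $j=10^{p-1}-1$ the range is $\llbracket(10^p-10+d+1)10^i,10^{p+i}-1\rrbracket$ with weight $\approx 10^{p-1+i}$, giving $\ln(\frac{10^p}{10^p-10+d+1})$. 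Summing these over $i\in\llbracket0,N-p\rrbracket$ and dividing by $10^N-10^{p-1}$ produces the factors $\frac{10^{p-2}}{9\cdot10^{p-1}}=\frac{1}{90}$ and $\frac{10^{p-1}}{9\cdot10^{p-1}}=\frac{1}{9}$; if these effects survived only in the top decade, as you assert, you would get $\frac{1}{100}$ and $\frac{1}{10}$ instead. Moreover, since $m_{(d,p)}$ and $n_{(d,p)}$ range only over $j\in\llbracket10^{p-2},10^{p-1}-2\rrbracket$, your claim that summing over $j$ yields precisely the four bulk constants silently discards these two edge cells at every scale $i<N-p$, so the plan as written does not reproduce the stated limit. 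The fix is to keep the full $j$-range $\llbracket10^{p-2}-1,10^{p-1}-1\rrbracket$ of the $a$-sum at each $i$, peel off the two boundary indices, and carry their geometric sums alongside $k_{(d,p)},l_{(d,p)},m_{(d,p)},n_{(d,p)}$, which is exactly what the paper's bounds for $a_{(i,d,p)}$ do.
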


\begin{proof}
Let $n$ be a positive integer such that $n\ge p$. According to Proposition \ref{proi}, we have $P_{(d,\phi_{(d,p)}(n),p)}=P_{(d,10^n-1,p)}$ 
\textit{i.e.}, knowing that in this case $k=\max\{i\in\mathbb N:10^{i+p}\le 10^n-1\}=n-p-1$:
%\begingroup 
\begin{align*}
\frac{1}{10^n-10^{p-1}}\Big(&\sum_{i=0}^{n-p}\big(\sum_{j=10^{p-2}}^{10^{p-1}-1}\sum_{b=(10j+d)10^i}^{(10j+(d+1))10^i-1}\frac{b-((9j+d)10^i+10^{p-2}-1)}
{b+1-10^{p-1}}\\
+&\sum_{j=10^{p-2}-1}^{10^{p-1}-1}\sum_{a=\max(10^{p+i-1},(10j+(d+1))10^i)}^{\min(10^{p+i}-1,(10(j+1)+d)10^i-1)}\frac{10^i(j+1)-10^{p-2}}{a+1-10^{p-1}}
\big)\Big)\,.
\end{align*}
%\endgroup
Let us denote by $b_{(i,d,p)}$ the positive number:
\begin{align*}
\sum_{j=10^{p-2}}^{10^{p-1}-1}\sum_{b=(10j+d)10^i}^{(10j+(d+1))10^i-1}\frac{b-((9j+d)10^i+10^{p-2}-1)}{b+1-10^{p-1}}\,,
\end{align*}
and by $a_{(i,d,p)}$ the positive number: 
\begin{align*}
\sum_{j=10^{p-2}-1}^{10^{p-1}-1}\sum_{a=\max(10^{p+i-1},(10j+(d+1))10^i)}^{\min(10^{p+i}-1,(10(j+1)+d)10^i-1)}\frac{10^i(j+1)-10^{p-2}}{a+1-10^{p-1}}\,.
\end{align*}
Thus we have:
\begin{align*}
P_{(d,\phi_{(d,p)}(n),p)}=\frac{1}{10^n-10^{p-1}}&\sum_{i=0}^{n-p}\Big(b_{(i,d,p)}+a_{(i,d,p)}\Big)\,.
\end{align*}

Let us first find an appropriate lower bound of $P_{(d,\phi_{(d,p)}(n),p)}$. We have:
\begingroup\scriptsize
\begin{align*}
b_{(i,d,p)}&=\sum_{j=10^{p-2}}^{10^{p-1}-1}\big(10^i-\sum_{b=(10j+d)10^i}^{(10j+(d+1))10^i-1}\frac{(9j+d)10^i+10^{p-2}-10^{p-1}}{b+1-10^{p-1}}\big)\\
&=9\times10^{p+i-2}-\sum_{j=10^{p-2}}^{10^{p-1}-1}((9j+d)10^i+10^{p-2}-10^{p-1})\sum_{b=(10j+d)10^i}^{(10j+(d+1))10^i-1}\frac{1}{b+1-10^{p-1}}
\end{align*}
\endgroup
Recall that for all integers $(p,q)$, such that $1<p<q$: 
\begin{align}\label{inelog}
\ln(\frac{q+1}{p})\le\sum_{k=p}^q\frac{1}{k}\le\ln(\frac{q}{p-1})\,.
\end{align}
Consequently, we obtain, for $i\ge1$:
\begingroup\scriptsize
\begin{align*}
b_{(i,d,p)}&\ge9\times10^{p+i-2}-\sum_{j=10^{p-2}}^{10^{p-1}-1}(9j+d)10^i\ln(\frac{(10j+(d+1))10^i-10^{p-1}}{(10j+d)10^i-10^{p-1}})\\
&\ge9\times10^{p+i-2}-\sum_{j=10^{p-2}}^{10^{p-1}-1}(9j+d)10^i\big(\ln(\frac{10j+(d+1)}{10j+d})+\ln(1+\frac{\frac{10^{p-1}}{10j+(d+1)}}{10^i(10j+d)-
10^{p-1}})\big)\\
&\ge9\times10^{p+i-2}-d\times10^i\sum_{j=10^{p-2}}^{10^{p-1}-1}\ln(\frac{10j+(d+1)}{10j+d})-9\times10^i\sum_{j=10^{p-2}}^{10^{p-1}-1}j\ln(\frac{10j+(d+1)}{10j+d})\\
&\enspace-\sum_{j=10^{p-2}}^{10^{p-1}-1}(9j+d)10^i\ln(1+\frac{\frac{10^{p-1}}{10j+(d+1)}}{10^i(10j+d)-10^{p-1}})\big)\,.
\end{align*}
\endgroup
Let us denote by $k_{(d,p)}$ the positive number $\sum_{j=10^{p-2}}^{10^{p-1}-1}\ln(\frac{10j+(d+1)}{10j+d})$ and $l_{(d,p)}$ the positive number 
$\sum_{j=10^{p-2}}^{10^{p-1}-1}j\ln(\frac{10j+(d+1)}{10j+d})$. Knowing that for all $x\in]-1;+\infty[$, we have $\ln(1+x)\le x$, we obtain:
\begingroup\scriptsize
\begin{align*}
b_{(i,d,p)}&\ge9\times10^{p+i-2}-d\times10^ik_{(d,p)}-9\times10^il_{(d,p)}-\sum_{j=10^{p-2}}^{10^{p-1}-1}(9j+d)10^i\frac{\frac{10^{p-1}}{10j+(d+1)}}{10^i(10j+d)-10^{p-1}}\\
&\ge9\times10^{p+i-2}-d\times10^ik_{(d,p)}-9\times10^il_{(d,p)}-\sum_{j=10^{p-2}}^{10^{p-1}-1}10^i\frac{10^{p-1}}{10^i\times10^{p-1}-10^{p-1}}\\
&\ge9\times10^{p+i-2}-d\times10^ik_{(d,p)}-9\times10^il_{(d,p)}-9\times10^{p-2}\frac{10^i}{10^i-1}\,.
\end{align*}
\endgroup
Similarly, we have thanks to inequalities (\ref{inelog}):
\begingroup\scriptsize
\begin{align*}
a_{(i,d,p)}&\ge\sum_{j=10^{p-2}}^{10^{p-1}-2}(10^i(j+1)-10^{p-2})\ln(\frac{(10(j+1)+d)10^i+1-10^{p-1}}{(10j+(d+1))10^i+1-10^{p-1}})\\
&\enspace+(10^{p-2+i}-10^{p-2})\ln(\frac{(10^{p-1}+d)10^i+1-10^{p-1}}{10^{p+i-1}+1-10^{p-1}})\\
&\enspace+(10^{p-1+i}-10^{p-2})\ln(\frac{10^{p+i}+1-10^{p-1}}{(10^{p}-10+d+1)10^i+1-10^{p-1}})\\
&\ge10^i\sum_{j=10^{p-2}}^{10^{p-1}-2}j\ln(\frac{10(j+1)+d}{10j+(d+1)})+10^i\sum_{j=10^{p-2}}^{10^{p-1}-2}j\ln(1+\frac{\frac{9\times(10^{p-1}-1)}
{10(j+1)+d}}{(10j+d+1)10^i+1-10^{p-1}})\\
&\enspace+(10^i-10^{p-2})\Big(\sum_{j=10^{p-2}}^{10^{p-1}-2}\big(\ln(\frac{10(j+1)+d}{10j+(d+1)})+
\ln(1+\frac{\frac{9\times(10^{p-1}-1)}{10(j+1)+d}}{(10j+d+1)10^i+1-10^{p-1}})\big)\Big)\\
&\enspace+(10^{p-2+i}-10^{p-2})\big(\ln(\frac{10^{p-1}+d}{10^{p-1}})+\ln(1+\frac{\frac{d(10^{p-1}-1)}{10^{p-1}+d}}{10^{p-1}10^i+1-10^{p-1}})\big)\\
&\enspace+(10^{p-1+i}-10^{p-2})(\ln(\frac{10^{p}}{10^{p}-10+d+1})+\ln(1+\frac{\frac{(10^{p-1}-1)(10-d-1)}{10^p}}{(10^p-10+d+1)10^i+1-10^{p-1}})).
\end{align*}
\endgroup
Let us denote by $m_{(d,p)}$ the positive number $\sum_{j=10^{p-2}}^{10^{p-1}-2}\ln(\frac{10(j+1)+d}{10j+(d+1)})$ and $n_{(d,p)}$ the positive 
number $\sum_{j=10^{p-2}}^{10^{p-1}-2}j\ln(\frac{10(j+1)+d}{10j+(d+1)})$:
\begingroup\scriptsize
\begin{align*}
a_{(i,d,p)}&\ge10^in_{(d,p)}+(10^i-10^{p-2})\Big(m_{(d,p)}+\sum_{j=10^{p-2}}^{10^{p-1}-2}
\ln(1+\frac{\frac{9\times(10^{p-1}-1)}{10(j+1)+d}}{(10j+d+1)10^i+1-10^{p-1}})\Big)\\
&\enspace+(10^{p-2+i}-10^{p-2})\ln(\frac{10^{p-1}+d}{10^{p-1}})+(10^{p-1+i}-10^{p-2})\ln(\frac{10^{p}}{10^{p}-10+d+1})\,.
\end{align*}
\endgroup
Hence we have:
\begingroup\scriptsize
\begin{align*}
P_{(d,\phi_{(d,p)}(n),p)}&\ge\frac{1}{10^n}\Big(a_{(0,d,p)}+b_{(0,d,p)}+\sum_{i=1}^{n-p}\big(9\times10^{p+i-2}-d\times10^ik_{(d,p)}
-9\times10^il_{(d,p)}\\
&\enspace+10^in_{(d,p)}+10^im_{(d,p)}+10^{p-2+i}\ln(\frac{10^{p-1}+d}{10^{p-1}})+10^{p-1+i}\ln(\frac{10^{p}}{10^{p}-10+d+1})\\
&\enspace-9\times10^{p-2}\frac{10}{9}-10^{p-2}\big(m_{(d,p)}+\ln(\frac{10^{p-1}+d}{10^{p-1}})+\ln(\frac{10^{p}}{10^{p}-10+d+1})\big)\\
&\enspace+(10^i-10^{p-2})\sum_{j=10^{p-2}}^{10^{p-1}-2}\ln(1+\frac{\frac{9\times(10^{p-1}-1)}{10(j+1)+d}}{(10j+d+1)10^i+1-10^{p-1}})\big)\Big)\,.
\end{align*}
\endgroup
In light of the following equality $\sum_{i=1}^{n-p}10^i=\frac{10^{n-p+1}-10}{9}$, we have:
\begingroup\scriptsize
\begin{align*}
P_{(d,\phi_{(d,p)}(n),p)}&\ge10^{-1}+\frac{10^{-p+1}(n_{(d,p)}+m_{(d,p)}-9l_{(d,p)}-dk_{(d,p)})}{9}+\frac{10^{-1}}{9}\ln(\frac{10^{p-1}+d}{10^{p-1}})\\
&\enspace+\frac{1}{9}\ln(\frac{10^{p}}{10^{p}-10+d+1})+\epsilon_{(d,n,p)}\,,
\end{align*}
\endgroup
where $\epsilon_{(d,n,p)}$ is:
\begingroup\scriptsize
\begin{align*}
&\frac{a_{(0,d,p)}+b_{(0,d,p)}}{10^n}-\frac{10^{p-1}}{10^n}+\frac{dk_{(d,p)}+9l_{(d,p)}-n_{(d,p)}-m_{(d,p)}}{9\times10^{n-1}}
-\frac{10^{p-1}}{9\times10^n}\ln(\frac{10^{p-1}+d}{10^{p-1}})\\
&-\frac{10^p}{9\times10^n}\ln(\frac{10^{p}}{10^{p}-10+d+1})-\frac{10^{p-1}(n-p)}{10^n}-\frac{10^{p-2}(n-p)}{10^n}\big(m_{(d,p)}+
\ln(\frac{10^{p-1}+d}{10^{p-1}})\\
&\enspace+\ln(\frac{10^{p}}{10^{p}-10+d+1})\big)+\frac{1}{10^n}\sum_{i=1}^{p-3}(10^i-10^{p-2})\sum_{j=10^{p-2}}^{10^{p-1}-2}
\ln(1+\frac{\frac{9\times(10^{p-1}-1)}{10(j+1)+d}}{(10j+d+1)10^i+1-10^{p-1}})\,.
\end{align*}
\endgroup
Knowing that for all $x\in]-1;+\infty[$, we have $\ln(1+x)\le x$, we obtain, for all $i\in\{1,...,p-3\}$:
\begingroup\scriptsize
\begin{align*}
\sum_{j=10^{p-2}}^{10^{p-1}-2}\ln(1+\frac{\frac{9\times(10^{p-1}-1)}{10(j+1)+d}}{(10j+d+1)10^i+1-10^{p-1}})
&\le\sum_{j=10^{p-2}}^{10^{p-1}-2}\frac{\frac{9\times(10^{p-1}-1)}{10(j+1)+d}}{(10j+d+1)10^i+1-10^{p-1}}\\
&\le10^{p-1}\frac{\frac{10^p}{10^{p-1}}}{d+2}\le10^p
\end{align*}
\endgroup
From the above upper bound and the definition of $\epsilon_{(d,n,p)}$, it may be deduced that $\lim\limits_{n\to+\infty}\epsilon_{(d,n,p)}=0$.

Let us now find an appropriate upper bound of $P_{(d,\phi_{(d,p)}(n),p)}$. Thanks to inequalities (\ref{inelog}):
\begingroup\scriptsize
\begin{align*}
b_{(i,d,p)}&\le9\times10^{p+i-2}-\sum_{j=10^{p-2}}^{10^{p-1}-1}((9j+d)10^i+10^{p-2}-10^{p-1})\\
&\enspace\ln(\frac{(10j+(d+1))10^i+1-10^{p-1}}{(10j+d)10^i+1-10^{p-1}})\\
&\le9\times10^{p+i-2}-\sum_{j=10^{p-2}}^{10^{p-1}-1}((9j+d)10^i+10^{p-2}-10^{p-1})\big(\ln(\frac{10j+(d+1)}{10j+d})\\
&\enspace+\ln(1+\frac{\frac{10^{p-1}-1}{10j+(d+1)}}{10^i(10j+d)+1-10^{p-1}})\big)\\
&\le9\times10^{p+i-2}-d\times10^ik_{(d,p)}-9\times10^il_{(d,p)}+10^{p-1}k_{(d,p)}\,.
\end{align*}
\endgroup
Similarly, we have thanks to inequalities (\ref{inelog}):
\begingroup\scriptsize
\begin{align*}
a_{(i,d,p)}&\le\sum_{j=10^{p-2}}^{10^{p-1}-2}10^i(j+1)\ln(\frac{(10(j+1)+d)10^i-10^{p-1}}{(10j+(d+1))10^i-10^{p-1}})\\
&\enspace+10^{p-2+i}\ln(\frac{(10^{p-1}+d)10^i-10^{p-1}}{10^{p+i-1}-10^{p-1}})+10^{p-1+i}\ln(\frac{10^{p+i}-10^{p-1}}{(10^{p}-10+d+1)10^i-10^{p-1}})\\
&\le10^in_{(d,p)}+10^i\sum_{j=10^{p-2}}^{10^{p-1}-2}j\ln(1+\frac{\frac{9\times10^{p-1}}{10(j+1)+d}}{(10j+d+1)10^i-10^{p-1}})\\
&\enspace+10^i\Big(m_{(d,p)}+\sum_{j=10^{p-2}}^{10^{p-1}-2}\ln(1+\frac{\frac{9\times10^{p-1}}{10(j+1)+d}}{(10j+d+1)10^i-10^{p-1}})\Big)\\
&\enspace+10^{p-2+i}\big(\ln(\frac{10^{p-1}+d}{10^{p-1}})+\ln(1+\frac{\frac{d\times10^{p-1}}{10^{p-1}+d}}{10^{p-1}10^i-10^{p-1}})\big)\\
&\enspace+10^{p-1+i}\big(\ln(\frac{10^{p}}{10^{p}-10+d+1})+\ln(1+\frac{\frac{10^{p-1}(10-d-1)}{10^p}}{(10^p-10+d+1)10^i-10^{p-1}})\big).
\end{align*}
\endgroup
Hence we have:
\begingroup\scriptsize
\begin{align*}
P_{(d,\phi_{(d,p)}(n),p)}&\le\frac{1}{10^n-10^{p-1}}\sum_{i=0}^{n-p}\Big(9\times10^{p+i-2}-d\times10^ik_{(d,p)}-9\times10^il_{(d,p)}+10^im_{(d,p)}\\
&\enspace+10^in_{(d,p)}+10^{p-2+i}\ln(\frac{10^{p-1}+d}{10^{p-1}})+10^{p-1+i}\ln(\frac{10^{p}}{10^{p}-10+d+1})\\
&\enspace+10^{p-1}k_{(d,p)}+10^i\sum_{j=10^{p-2}}^{10^{p-1}-2}j\ln(1+\frac{\frac{9\times10^{p-1}}{10(j+1)+d}}{(10j+d+1)10^i-10^{p-1}})\\
&\enspace+10^i\sum_{j=10^{p-2}}^{10^{p-1}-2}\ln(1+\frac{\frac{9\times10^{p-1}}{10(j+1)+d}}{(10j+d+1)10^i-10^{p-1}})\\
&\enspace+10^{p-2+i}\ln(1+\frac{\frac{d\times10^{p-1}}{10^{p-1}+d}}{10^{p-1}10^i-10^{p-1}})\\
&\enspace+10^{p-1+i}\ln(1+\frac{\frac{10^{p-1}(10-d-1)}{10^p}}{(10^p-10+d+1)10^i-10^{p-1}})\Big)\,.
\end{align*}
\endgroup
In light of the following equality $\sum_{i=0}^{n-p}10^i=\frac{10^{n-p+1}-1}{9}$, we have:
\begingroup\scriptsize
\begin{align*}
\lim\limits_{n\to+\infty}(\frac{1}{10^n-10^{p-1}}\sum_{i=0}^{n-p}9\times10^{p+i-2})&=10^{-1}\\
\lim\limits_{n\to+\infty}(-\frac{1}{10^n-10^{p-1}}\sum_{i=0}^{n-p}d\times10^ik_{(d,p)})&=\frac{-dk_{(d,p)}}{9\times10^{p-1}}\\
\lim\limits_{n\to+\infty}(-\frac{1}{10^n-10^{p-1}}\sum_{i=0}^{n-p}9\times10^il_{(d,p)})&=-l_{(d,p)}10^{1-p}\\
\lim\limits_{n\to+\infty}(\frac{1}{10^n-10^{p-1}}\sum_{i=0}^{n-p}10^in_{(d,p)})&=\frac{m_{(d,p)}}{9\times10^{p-1}}\\
\lim\limits_{n\to+\infty}(\frac{1}{10^n-10^{p-1}}\sum_{i=0}^{n-p}10^in_{(d,p)})&=\frac{n_{(d,p)}}{9\times10^{p-1}}\\
\lim\limits_{n\to+\infty}(\frac{1}{10^n-10^{p-1}}\sum_{i=0}^{n-p}10^{p-2+i}\ln(\frac{10^{p-1}+d}{10^{p-1}}))&=
\frac{1}{90}\ln(\frac{10^{p-1}+d}{10^{p-1}})\\
\lim\limits_{n\to+\infty}(\frac{1}{10^n-10^{p-1}}\sum_{i=0}^{n-p}10^{p-1+i}\ln(\frac{10^{p}}{10^{p}-10+d+1}))&=
\frac{1}{9}\ln(\frac{10^{p}}{10^{p}-10+d+1}))\\
\lim\limits_{n\to+\infty}(\frac{1}{10^n-10^{p-1}}\sum_{i=0}^{n-p}10^{p-1}k_{(d,p)})&=0\,.
\end{align*}
\endgroup
Knowing that for all $x\in]-1;+\infty[$, we have $\ln(1+x)\le x$, we obtain, for $i\ge1$:
\begingroup\scriptsize
\begin{align*}
&10^i\sum_{j=10^{p-2}}^{10^{p-1}-2}j\ln(1+\frac{\frac{9\times10^{p-1}}{10(j+1)+d}}{(10j+d+1)10^i-10^{p-1}})\le10^{i+p-1}
\frac{\frac{10^{p}}{10^{p-1}}}{10^{p-1}10^i-10^{p-1}}=\frac{10^{i+1}}{10^i-1}\le\frac{100}{9}\\
&10^i\sum_{j=10^{p-2}}^{10^{p-1}-2}\ln(1+\frac{\frac{9\times10^{p-1}}{10(j+1)+d}}{(10j+d+1)10^i-10^{p-1}})\le10^{i}
\frac{\frac{10^{p}}{10^{p-1}}}{10^{p-1}10^i-10^{p-1}}\le\frac{100}{9\times10^{p-1}}\\
&10^{p-2+i}\ln(1+\frac{\frac{d\times10^{p-1}}{10^{p-1}+d}}{10^{p-1}10^i-10^{p-1}})\le10^{p-2+i}
\frac{\frac{d\times10^{p-1}}{10^{p-1}+d}}{10^{p-1}10^i-10^{p-1}}\le\frac{10^{p-1+i}}{10^{p-1}10^i-10^{p-1}}\le\frac{10}{9}\\
&10^{p-1+i}\ln(1+\frac{\frac{10^{p-1}(10-d-1)}{10^p}}{(10^p-10+d+1)10^i-10^{p-1}})\le
\frac{10^{p-1+i}}{10^{p-1}10^i-10^{p-1}}\le\frac{10}{9}\,.
\end{align*}
\endgroup
Thanks to $P_{(d,\phi_{(d,p)}(n),p)}$ upper bound and the above inequalities, the result follows.
\end{proof}

Let us denote by $\alpha_{(d,p)}$ the limit of $(P_{(d,\phi_{(d,p)}(n),p)})_{n\in\mathbb N\setminus\llbracket0,p-1\rrbracket}$. Here is a few values of 
$P_{(d,\phi_{(d,p)}(n),p)}$:

\begin{table}[ht]
\centering
\begin{tabular}{|c||c|c|c|c||c|}
\hline
\rowcolor{gray!40}$d$&$P_{(d,\phi_{(d,2)}(2),2)}$&$P_{(d,\phi_{(d,2)}(3),2)}$&$P_{(d,\phi_{(d,2)}(4),2)}$&$P_{(d,\phi_{(d,2)}(5),2)}$&$\alpha_{(d,2)}$\\\hline
$0$& $0.1330$ & $0.1144$ & $0.1123$ & $0.1121$ & $0.1121$  \\\hline
$1$& $0.1190$ & $0.1103$ & $0.1092$ & $0.1091$ & $0.1091$  \\\hline
$2$& $0.1107$ & $0.1068$ & $0.1063$ & $0.1062$ & $0.1062$  \\\hline
$3$& $0.1044$ & $0.1037$ & $0.1035$ & $0.1035$ & $0.1035$  \\\hline
$4$& $0.0991$ & $0.1007$ & $0.1009$ & $0.1009$ & $0.1009$ \\\hline
$5$& $0.0945$ & $0.0979$ & $0.0983$ & $0.0984$ & $0.0984$  \\\hline
$6$& $0.0903$ & $0.0953$ & $0.0958$ & $0.0959$ & $0.0959$  \\\hline
$7$& $0.0865$ & $0.0927$ & $0.0935$ & $0.0936$ & $0.0936$  \\\hline
$8$& $0.0829$ & $0.0902$ & $0.0912$ & $0.0913$ & $0.0913$  \\\hline
$9$& $0.0796$ & $0.0879$ & $0.0889$ & $0.0891$ & $0.0891$  \\\hline
\end{tabular}
\caption{Values of $P_{(d,\phi_{(d,2)}(n),2)}$ and $\alpha_{(d,2)}$, for $n\in\llbracket2,5\rrbracket$. These values are rounded to the nearest 
ten-thousandth.}
\label{tab5}
\end{table}

\begin{table}[ht]
\centering
\begin{tabular}{|c||c|c|c||c|}
\hline
\rowcolor{gray!40}$d$&$P_{(d,\phi_{(d,3)}(3),3)}$&$P_{(d,\phi_{(d,3)}(4),3)}$&$P_{(d,\phi_{(d,3)}(5),3)}$&$\alpha_{(d,3)}$\\\hline
$0$& $0.1045$ & $0.1015$ & $0.1012$ &  $0.1012$  \\\hline
$1$& $0.1028$ & $0.1011$ & $0.1009$ &  $0.1009$  \\\hline
$2$& $0.1017$ & $0.1008$ & $0.1007$ &  $0.1006$  \\\hline
$3$& $0.1008$ & $0.1004$ & $0.1004$ &  $0.1004$  \\\hline
$4$& $0.1000$ & $0.1001$ & $0.1001$ &  $0.1001$ \\\hline
$5$& $0.0993$ & $0.0998$ & $0.0999$ &  $0.0999$  \\\hline
$6$& $0.0986$ & $0.0995$ & $0.0996$ &  $0.0996$  \\\hline
$7$& $0.0980$ & $0.0992$ & $0.0993$ &  $0.0994$  \\\hline
$8$& $0.0974$ & $0.0989$ & $0.0991$ &  $0.0991$  \\\hline
$9$& $0.0968$ & $0.0986$ & $0.0988$ &  $0.0989$  \\\hline
\end{tabular}
\caption{Values of $P_{(d,\phi_{(d,3)}(n),3)}$ and $\alpha_{(d,3)}$, for $n\in\llbracket3,5\rrbracket$. These values are rounded to the nearest 
ten-thousandth.}
\label{tab6}
\end{table}

\section{Graphs of \texorpdfstring{$(P_{(d,n,p)})_{n\in\mathbb N\setminus\llbracket0,10^{p-1}-1\rrbracket}$}{}}

Let us plot graphs of sequences $(P_{(d,n,2)})_{n\in\mathbb N\setminus\llbracket0,10^{p-1}-1\rrbracket}$ for values of $n$ from $10$ to $1000$ 
(Figure \ref{fig3}). Then we plot graphs of $(P_{(d,n,3)})_{n\in\mathbb N\setminus\llbracket0,10^{p-1}-1\rrbracket}$, for 
$n\in\llbracket100,20000\rrbracket$ (Figure \ref{fig4}).

\begin{figure}[ht]
\begin{minipage}{0.48\linewidth}
\centering
\includegraphics[scale=0.45,clip=true]{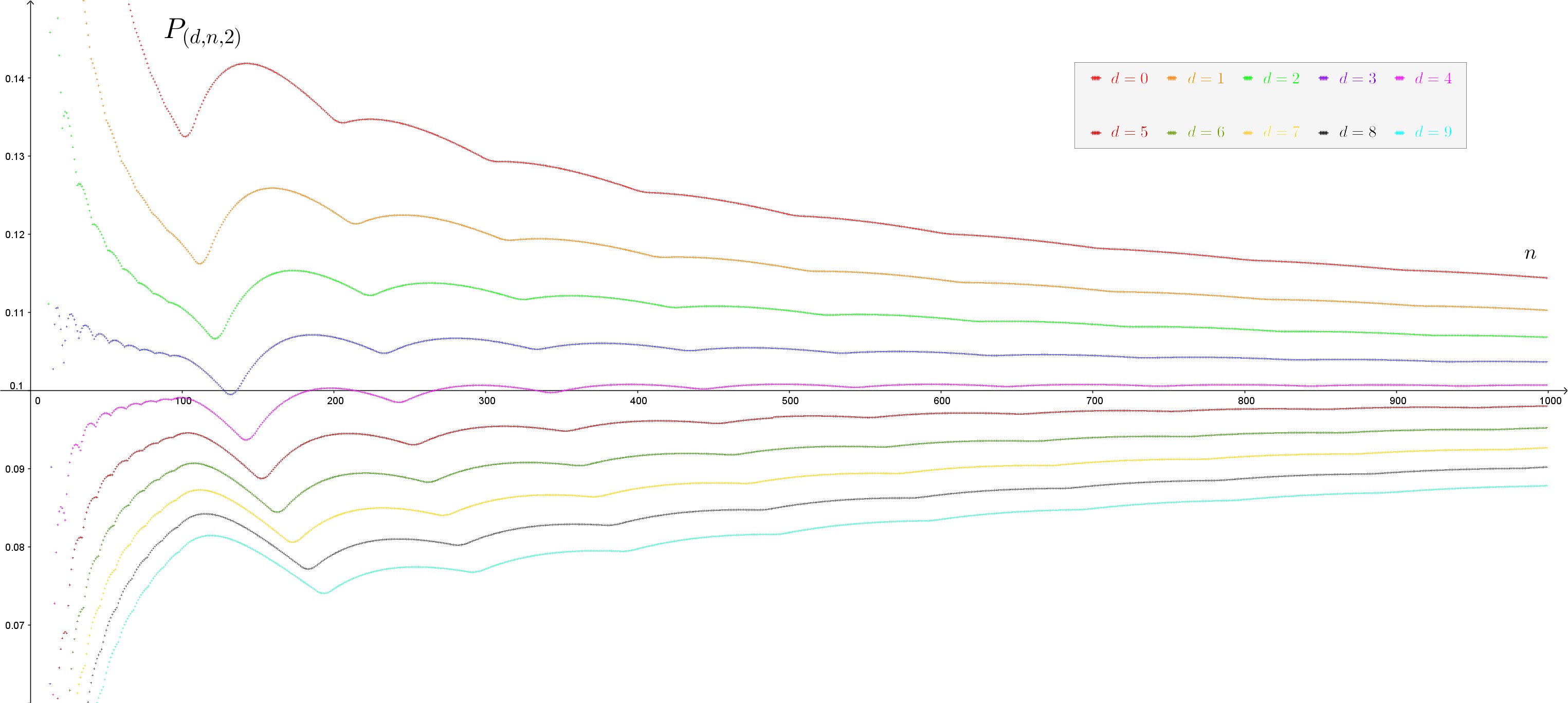}
\caption{For $d\in\llbracket0,9\rrbracket$, graphs of $(P_{(d,n,2)})_{n\in\mathbb N\setminus\llbracket0,10^{p-1}-1\rrbracket}$.}\label{fig3}
\vspace{0.695\baselineskip}

\end{minipage}\hfill
\begin{minipage}{0.48\linewidth}
\centering
\includegraphics[scale=0.55,clip=true]{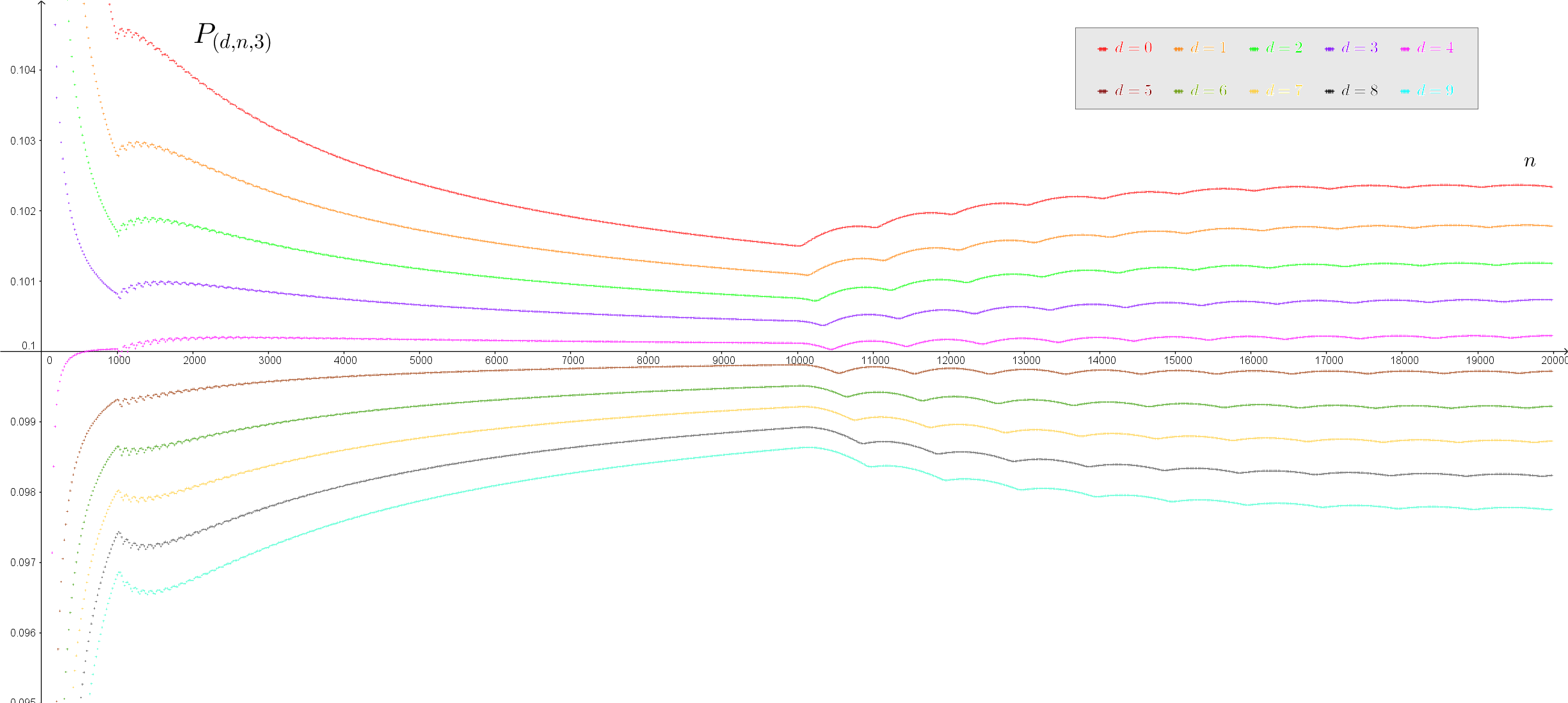}
\caption{For $d\in\llbracket0,9\rrbracket$, graphs of $(P_{(d,n,3)})_{n\in\mathbb N\setminus\llbracket0,10^{p-1}-1\rrbracket}$. Note that points have 
not been all represented.}\label{fig4}
\end{minipage}
\end{figure}

Let us plot two additional graphs of $P_{(d,n,2)}$ \textit{versus} $\log(n)$ and $P_{(d,n,3)}$ \textit{versus} $\log(n)$ for values of $n$ from $10$ to 
$2000000$:

\begin{figure}[ht]
\begin{minipage}{0.48\linewidth}
\centering
\includegraphics[scale=0.53,clip=true]{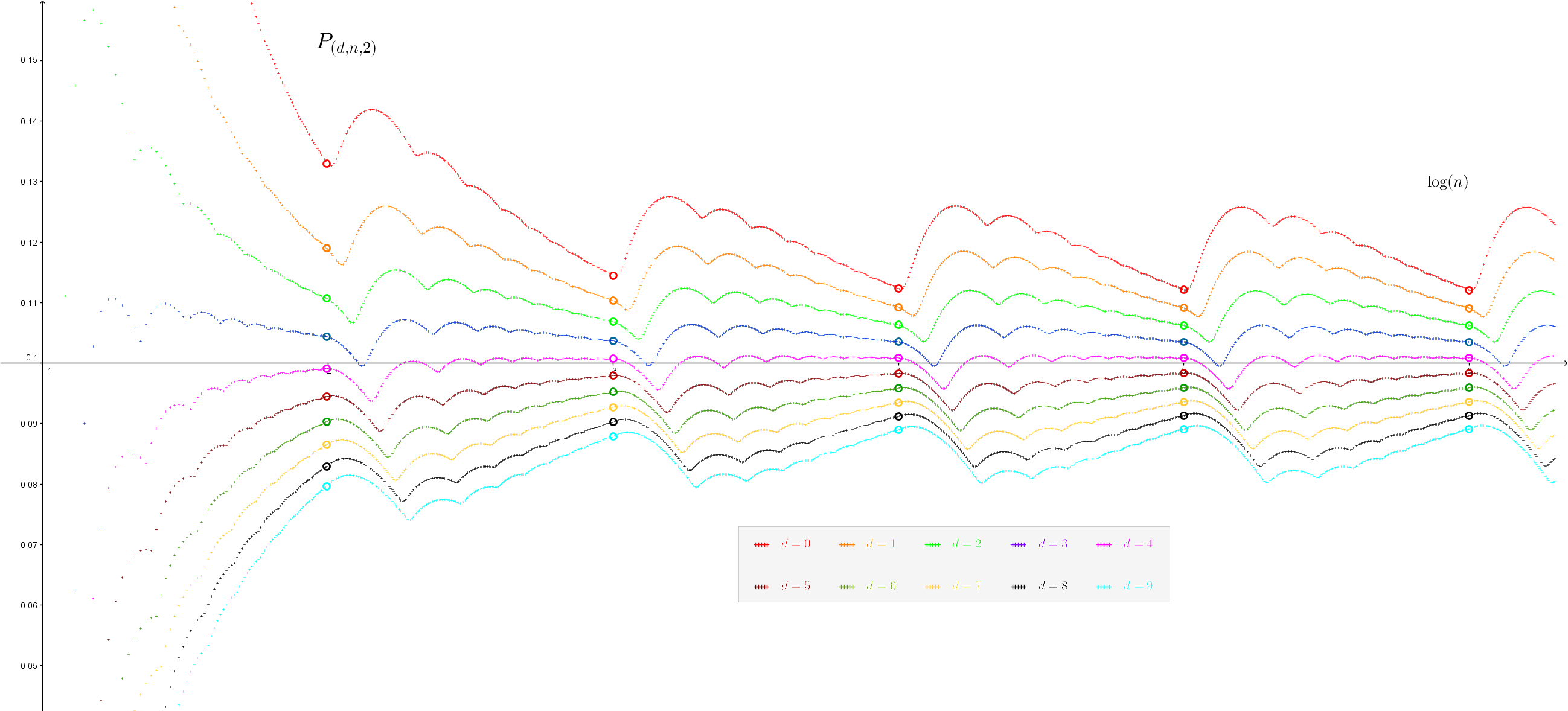}
\caption{For $d\in\llbracket0,9\rrbracket$, graphs of $P_{(d,n,2)}$ \textit{versus} $\log(n)$. Note that points have not been all ploted. 
The first five values of the above defined subsequence, for each $d$, being represented by bigger plots.}\label{fig5}
\vspace{0.695\baselineskip}

\end{minipage}\hfill
\begin{minipage}{0.48\linewidth}
\centering
\includegraphics[scale=0.64,clip=true]{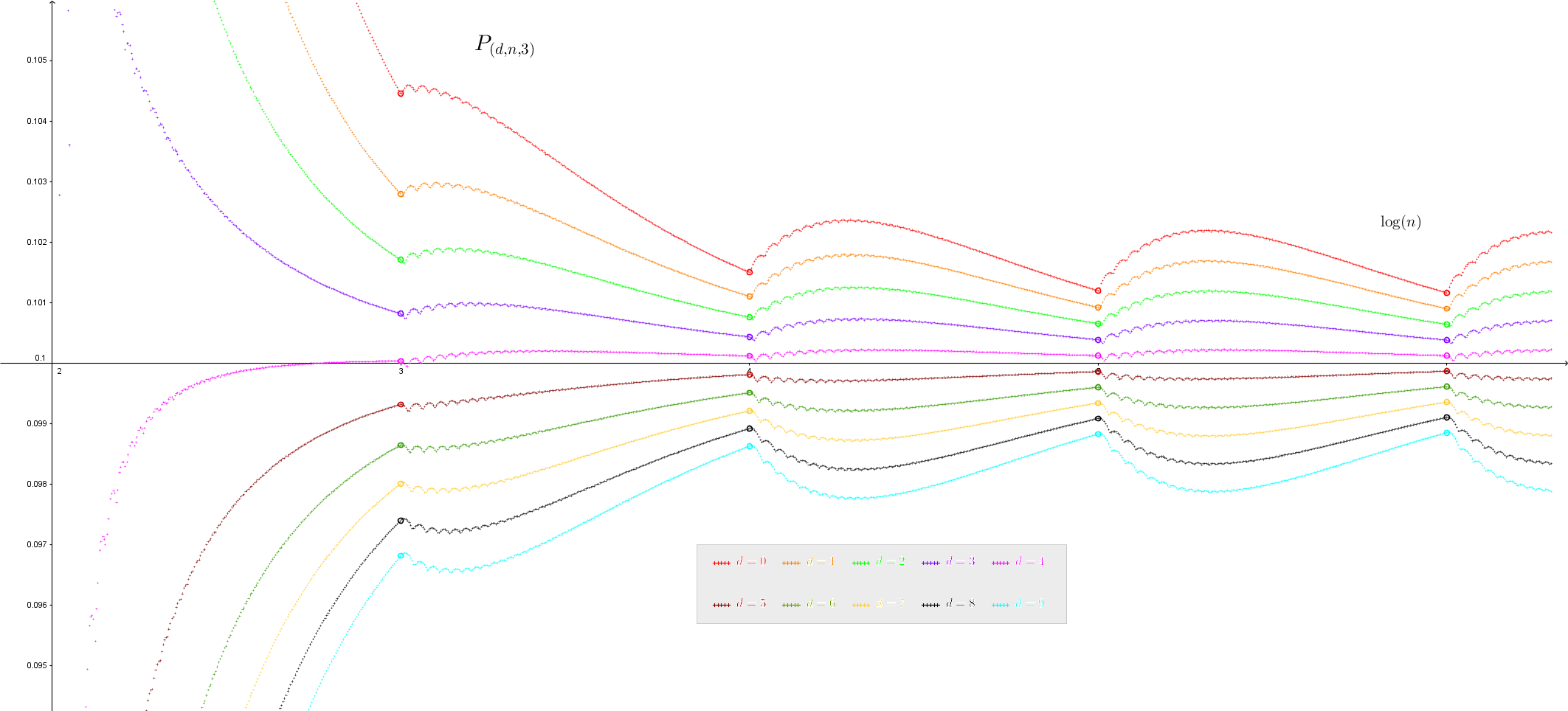}
\caption{For $d\in\llbracket0,9\rrbracket$, graphs of $P_{(d,n,3)}$ \textit{versus} $\log(n)$. Note that points have not been all ploted. 
The first four values of the above defined subsequence, for each $d$, being represented by bigger plots.}\label{fig6}
\end{minipage}
\end{figure}

Through Figures \ref{fig5} and \ref{fig6}, the proportion of each $d$ as leading digit, $d\in\llbracket0,9\rrbracket$, seems to fluctuate and 
consequently not follow Benford's Law. Each "pseudo cycle" seems to be composed of $9\times10^{p-2}$ short waves. Note that these observations were not 
obvious in view of Figures \ref{fig3} and \ref{fig4}.

We can also prove the following result:

\begin{proposition}
For all $n\in\mathbb N\setminus\llbracket0,10^{p-1}-1\rrbracket$ such that $n\ge10^{p-1}+9$ and for all $(a,b)\in\llbracket0,9\rrbracket^2$ such that 
$a<b$, we have:
\begin{align*}
P_{(a,n,p)}>P_{(b,n,p)}\,.
\end{align*}
\end{proposition}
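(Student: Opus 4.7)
The plan is to start from the total-probability decomposition already exploited in the proof of Proposition~\ref{proi}, and reduce the statement to a pointwise comparison of digit counts. Letting $C_d(N)$ denote the number of integers in $\llbracket 10^{p-1},N\rrbracket$ whose $p^{\text{th}}$ digit equals $d$, equations (\ref{equi}) and (\ref{espi}) give
\begin{align*}
P_{(d,n,p)}=\frac{1}{n+1-10^{p-1}}\sum_{q=1}^{n+1-10^{p-1}}\frac{C_d(q+10^{p-1}-1)}{q},
\end{align*}
so proving $P_{(a,n,p)}>P_{(b,n,p)}$ reduces to showing that $\sum_{q}\bigl(C_a(N_q)-C_b(N_q)\bigr)/q>0$, where $N_q=q+10^{p-1}-1$.

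The first and main step is the pointwise bound $C_{b-1}(N)\ge C_b(N)$ for every $N\ge 10^{p-1}$ and every $b\in\llbracket 1,9\rrbracket$, which iterated yields $C_a(N)\ge C_b(N)$ whenever $a<b$. I would prove it by exhibiting an explicit injection $\phi$ from $\{M\in\llbracket 10^{p-1},N\rrbracket:\text{$p^{\text{th}}$ digit of $M$ is }b\}$ into the analogous set with $b$ replaced by $b-1$, defined by $\phi(M)=M-10^{k'}$, where $k'$ is chosen so that $M$ has exactly $p+k'$ decimal digits. Since $b\ge 1$ there is no borrow, so $\phi(M)$ has the same number of digits as $M$ and its $p^{\text{th}}$ digit is $b-1$. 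The image stays in $\llbracket 10^{p-1},N\rrbracket$ because $\phi(M)\ge 10^{p+k'-1}\ge 10^{p-1}$ and $\phi(M)<M\le N$. For injectivity, observe that $\phi$ preserves the number of digits, so $\phi(M_1)=\phi(M_2)$ forces $k'_1=k'_2$ and hence $M_1=M_2$.

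The second step is to locate a strictly positive term in the sum. For $q=a+1$ the integers $10^{p-1},10^{p-1}+1,\dots,10^{p-1}+a$ have $p^{\text{th}}$ digits $0,1,\dots,a$ respectively, so $C_a(N_{a+1})=1$ while $C_b(N_{a+1})=0$. The hypothesis $n\ge 10^{p-1}+9$ together with $a\le 8$ ensures that $q=a+1$ lies in $\llbracket 1,n+1-10^{p-1}\rrbracket$, so this index contributes $1/(a+1)>0$; since the first step guarantees every other summand is nonnegative, the total sum is strictly positive.

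The step I expect to be the main obstacle is establishing the pointwise domination $C_{b-1}(N)\ge C_b(N)$: a direct combinatorial argument based on blocks of $10,100,\dots$ is tempting but must carefully handle partial blocks as $N$ crosses powers of ten. The digit-decrement injection bypasses this difficulty precisely because it preserves the digit count of each integer, making distinct preimages automatically produce distinct images, regardless of how the interval $\llbracket 10^{p-1},N\rrbracket$ straddles such powers.
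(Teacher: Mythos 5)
Your proof is correct and follows essentially the same route as the paper: the paper likewise reduces the claim to the pointwise comparison of the counting sets $\mathscr E_{(a,m)}$ and $\mathscr E_{(b,m)}$ via the digit-decrementing injection $e\mapsto e-(b-a)\times 10^{dg-p}$ (which you realize as an iterated single decrement), and then obtains strictness from the index $m=10^{p-1}+a$, exactly your $q=a+1$ term. Your write-up is simply a more detailed verification of the same argument.
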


The relative position of graphs of $P_{(d,n,p)}$, for $d\in\llbracket0,9\rrbracket$, can be observed on Figures \ref{fig3}, \ref{fig4} and \ref{fig5}.

\begin{proof}
$(a,b)\in\llbracket0,9\rrbracket^2$ such that $a<b$. For all $m\in\llbracket 10^{p-1},n\rrbracket$, let us denote by $\mathscr E_{(a,m)}$ 
the subset of $\mathbb N$ such that $\mathscr E_{(a,m)}=\{j\le m:\text{the }p^{\text{th}}\text{ digit of $j$ is }a\}$.

For all $e\in\mathscr E_{(b,m)}$, we consider $e'=e-(b-a)\times10^{dg-p}$ where $dg$ is the number of digits of the integer $e$. It is clear that 
$e'\in\mathscr E_{(a,m)}$. Thus we get: $\big|E_{(a,m)}\big|\ge\big|E_{(b,m)}\big|$. 

We also have $P_{(a,10^{p-1}+a,p)}=\frac{1}{a+1}>P_{(b,10^{p-1}+a,p)}=0$. The result follows.
\end{proof}

\begin{remark}
For $n\in\mathbb N\setminus\llbracket0,10^{p-1}-1\rrbracket$, we have, if $n<10^{p-1}+d$, $P_{(d,n,p)}=0$. Hence for all 
$n\in\mathbb N^\setminus\llbracket0,10^{p-1}-1\rrbracket$ and for all $(a,b)\in\llbracket0,9\rrbracket^2$ such that $a<b$, we have:
\begin{align*}
P_{(a,n,p)}\ge P_{(b,n,p)}\,.
\end{align*}
\end{remark}

Let us henceforth provide the following equality:

\begin{proposition}\label{pro2i}
\begin{align*}
P_{(d,n,p)}=\frac{1}{n+1-10^{p-1}}\Big(P_{(d,10^{k+p}-1,p)}\times(10^{k+p}-10^{p-1})+r_{(n,d,p)}\Big)\,,
\end{align*}
where:
\begin{center}
$k=\max\{i\in\mathbb N:10^{i+p}\le n\}\,$.
\end{center}
\end{proposition}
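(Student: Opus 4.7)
The idea is to rewrite both sides as partial sums of the same series over $q$, split at $q^\star=10^{k+p}-10^{p-1}$, and match the tail with $r_{(n,d,p)}$.

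From equations~(\ref{equi}) and~(\ref{espi}) in the proof of Proposition~\ref{proi} one has
\[
P_{(d,n,p)}\cdot(n+1-10^{p-1})=\sum_{q=1}^{n+1-10^{p-1}}\p(D_{(n,p)}=d\mid F_{(n,p)}=q),
\]
and the conditional probability on the right is governed only by the second die (which rolls uniformly over $\llbracket10^{p-1},q+10^{p-1}-1\rrbracket$); hence it does not change if $n$ is replaced by any $n'$ with $n'\ge q+10^{p-1}-1$. Writing $m=10^{k+p}-1$ and $q^\star=m+1-10^{p-1}$, I would split the sum at $q^\star$: the head $\sum_{q=1}^{q^\star}\p(D_{(n,p)}=d\mid F_{(n,p)}=q)$ equals, by the same identity applied to $n'=m$, the quantity $P_{(d,m,p)}\cdot(10^{k+p}-10^{p-1})$, which yields the first summand on the right-hand side.

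It remains to show that the tail $\sum_{q=q^\star+1}^{n+1-10^{p-1}}\p(D_{(n,p)}=d\mid F_{(n,p)}=q)$ equals $r_{(n,d,p)}$. For every such $q$ one has $q+10^{p-1}-1\in\llbracket10^{k+p},n\rrbracket$ and $k_q=k$, so the conditional probabilities are given by~(\ref{p1i}) and~(\ref{p2i}) with $k_q=k$. I would group the values of $q$ by the block index $j=10^{p-2}+t$: those for which $q+10^{p-1}-1$ lies in a $b$-block $[(10j+d)10^{k+1},(10j+d+1)10^{k+1}-1]$ (whose integers have $p^{\text{th}}$ digit $d$) feed into the $b$-sums of $r_{(n,d,p)}$, whereas those lying in a gap $[(10j+d+1)10^{k+1},(10(j+1)+d)10^{k+1}-1]$ feed into the $a$-sums. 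The outer upper index $l$ is the largest $j$ such that $(10j+d)10^{k+1}\le n$, which after simplification coincides with $\lfloor(n-(10^{p-1}+d)10^{k+1})/10^{k+2}\rfloor+10^{p-2}$.

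The main obstacle is the bookkeeping of the last, possibly truncated, block. When the $p^{\text{th}}$ digit of $n$ equals $d$, the final $q$'s fall inside a $b$-block cut off at $n$, which produces the $\min(n,\cdot)$ in the first sum and the upper index $l-1$ in the second sum of the first case of Proposition~\ref{proi}; when the $p^{\text{th}}$ digit of $n$ differs from $d$, the final $q$'s fall inside a gap cut off at $n$, producing the $\min(n,\cdot)$ in the $a$-sum and the upper index $l$ in both sums of the second case. A careful endpoint analysis then recovers the two formulas for $r_{(n,d,p)}$ verbatim and concludes the proof.
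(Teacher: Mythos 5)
Your argument is correct and is essentially the derivation the paper has in mind: its proof of Proposition~\ref{pro2i} consists of the single remark that the result follows directly from Proposition~\ref{proi}, and your splitting of the total-probability sum at $q^\star=10^{k+p}-10^{p-1}$ (using that $\p(D_{(n,p)}=d\mid F_{(n,p)}=q)$ depends only on $q$) together with the identification of the tail with $r_{(n,d,p)}$ is exactly the bookkeeping that remark leaves implicit. You simply make explicit what the paper omits.
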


\begin{proof}
Results are directly derived from Proposition \ref{proi}.
\end{proof}

\section{Study of \texorpdfstring{$9\times10^{p-2}$}{} additional subsequences}

To definitively bring to light the fact that the sequence $(P_{(d,n,p)})_{n\in\mathbb N\setminus\llbracket0,10^{p-1}-1\rrbracket}$ does not converge, 
we will show that there exist additional subsequences that converge to limits different from those of 
$(P_{(d,\phi_{(d,p)}(n),p)})_{n\in\mathbb N\setminus\llbracket0,p-1\rrbracket}$.

For $i\in\llbracket10^{p-2},10^{p-1}-1\rrbracket$, let us in this way study the $9\times10^{p-2}$ 
subsequences $(P_{(d,\psi_{(d,p,i)}(n),p)})_{n\in\mathbb N\setminus\llbracket0,p-1\rrbracket}$ where $\psi_{(d,p,i)}$ is the function from 
$\mathbb N\setminus\llbracket0,p-1\rrbracket$ to $\mathbb N$ that maps $n$ to $(10i+(d+1))10^{n-p+1}-1$. We get the below result:
\begin{proposition}$i\in\llbracket10^{p-2},10^{p-1}-1\rrbracket$.\\
The subsequence $(P_{(d,\psi_{(d,p,i)}(n),p)})_{n\in\mathbb N\setminus\llbracket0,p-1\rrbracket}$ converges to:
\begingroup\scriptsize
\begin{align*}
\frac{\alpha_{(d,p)}10^{p-1}+i+1-10^{p-2}-k_{(d,p,i)}d-9l_{(d,p,i)}+m_{(d,p,i)}+n_{(d,p,i)}+10^{p-2}
\ln(\frac{10^{p-1}+d}{10^{p-1}})}{10i+d+1}\,,
\end{align*}
\endgroup
where:
\[\left \{
\begin{array}{l @{=} l}
    k_{(d,p,i)} & \sum_{j=10^{p-2}}^{i}\ln(\frac{10j+(d+1)}{10j+d}) \\
    l_{(d,p,i)} & \sum_{j=10^{p-2}}^{i}j\ln(\frac{10j+(d+1)}{10j+d}) \\
    m_{(d,p,i)} & \sum_{j=10^{p-2}}^{i-1}\ln(\frac{10(j+1)+d}{10j+(d+1)}) \\
    n_{(d,p,i)} & \sum_{j=10^{p-2}}^{i-1}j\ln(\frac{10(j+1)+d}{10j+(d+1)})\,.
\end{array}
\right.\]
\end{proposition}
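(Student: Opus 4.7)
The plan is to apply Proposition \ref{pro2i} with $n$ replaced by $N := \psi_{(d,p,i)}(n) = (10i+d+1)\,10^{n-p+1}-1$, and then handle the remainder term $r_{(N,d,p)}$ using the same logarithmic-sandwich technique that appears in the proof of Proposition \ref{sub1i}.

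First I would compute the auxiliary integers $k$ and $l$ appearing in Propositions \ref{proi} and \ref{pro2i}. From $10^{p-1}+1 \le 10i+d+1 \le 10^p$ we get $10^n \le N < 10^{n+1}$, hence $k = n-p$ and therefore $10^{k+p}-1 = 10^n-1 = \phi_{(d,p)}(n)$. Direct evaluation of the floor in the definition of $l$ gives $l = i$. Moreover, the $p^{\text{th}}$ digit of $N$ is $d$: writing $(10i+d+1)\,10^{n-p+1}-1$ as the integer whose decimal expansion is the digits of $10i+d$ followed by $n-p+1$ nines (handling $d=9$ by rewriting $10i+10=10(i+1)$ and iterating the borrow, with the extremal case $i=10^{p-1}-1,\,d=9$ giving $N=10^{n+1}-1$) confirms this. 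So we are in the first case of $r_{(n,d,p)}$. Proposition \ref{pro2i} then gives
\begin{align*}
P_{(d,N,p)} = \frac{P_{(d,\phi_{(d,p)}(n),p)}\,(10^n-10^{p-1}) + r_{(N,d,p)}}{(10i+d+1)\,10^{n-p+1}-10^{p-1}}.
\end{align*}
By Proposition \ref{sub1i}, $P_{(d,\phi_{(d,p)}(n),p)} \to \alpha_{(d,p)}$; combined with the limit $(10^n - 10^{p-1})/((10i+d+1)\,10^{n-p+1}-10^{p-1}) \to 10^{p-1}/(10i+d+1)$, this yields the leading contribution $\alpha_{(d,p)}\,10^{p-1}/(10i+d+1)$ in the claimed limit.

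The core of the argument is then to show that $r_{(N,d,p)}/((10i+d+1)\,10^{n-p+1})$ converges to the remaining terms of the stated numerator divided by $10i+d+1$. Writing $A := 10^{n-p+1}$, the first-case remainder is a sum of two double sums. The first runs over $j \in \llbracket 10^{p-2}, i\rrbracket$; for each $j$, including $j=i$ where the cutoff $\min(N,(10j+d+1)A-1)$ coincides with $(10j+d+1)A-1$ by our choice of $N$, the inner $b$-sum has the full length $A$. Sandwiching each inner sum between logarithms via (\ref{inelog}) and isolating the dominant logarithmic part with $\ln(1+x)\le x$, exactly as in the proof of Proposition \ref{sub1i}, each $j$-summand contributes to leading order $A\bigl(1-(9j+d)\ln(\tfrac{10j+d+1}{10j+d})\bigr)$. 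Summing over $j\in\llbracket 10^{p-2},i\rrbracket$ produces $A\bigl((i+1-10^{p-2})-9\,l_{(d,p,i)}-d\,k_{(d,p,i)}\bigr)$. The second double sum runs over $j \in \llbracket 10^{p-2}-1, i-1\rrbracket$; for $j \ge 10^{p-2}$ the $\max$ reduces to $(10j+d+1)A$ and the inner $a$-sum contributes $A(j+1)\ln(\tfrac{10(j+1)+d}{10j+d+1})$, summing to $A(n_{(d,p,i)}+m_{(d,p,i)})$. The boundary summand $j=10^{p-2}-1$, where the $\max$ equals $10^n$, contributes $10^{p-2}A\,\ln(\tfrac{10^{p-1}+d}{10^{p-1}})$. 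Assembling these contributions and dividing by $(10i+d+1)A$ yields precisely the remaining terms of the claimed limit.

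The main obstacle is the careful bookkeeping in this last step: tracking how every summand produces its claimed contribution in the closed form, and verifying that all error pieces (the $O(A^{-1})$ terms from the $\ln(1+x)\approx x$ expansions, the discrepancies between the Riemann-type estimates and the exact sums, and the $\min/\max$ boundary effects) vanish in the limit. No technique beyond those already deployed in the proof of Proposition \ref{sub1i} is required.
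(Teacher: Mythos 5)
Your proposal is correct and follows essentially the same route as the paper: apply Proposition \ref{pro2i} with $n$ replaced by $\psi_{(d,p,i)}(n)$ (so that $k=n-p$ and the main term is governed by Proposition \ref{sub1i}), then evaluate the two double sums in $r_{(\psi_{(d,p,i)}(n),d,p)}$ asymptotically via the logarithmic sandwich (\ref{inelog}). Your explicit verifications that $l=i$ and that the $p^{\text{th}}$ digit of $\psi_{(d,p,i)}(n)$ is $d$ (so the first case of the remainder applies) are details the paper leaves implicit, but the argument is the same.
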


\begin{proof}$i\in\llbracket10^{p-2},10^{p-1}-1\rrbracket$.
Thanks to Proposition \ref{pro2i}, we have, for $n\in\mathbb N\setminus\llbracket0,p-1\rrbracket$:
\begin{align*}
P_{(d,\psi_{(d,p,i)}(n),p)}&=\frac{1}{\big(10i+(d+1)\big)10^{n-p+1}-10^{p-1}}\Big(P_{(d,10^n-1,p)}\times(10^n-10^{p-1})\\
&\quad+r_{(\psi_{(d,p,i)}(n),d,p)}\Big)\,.
\end{align*}
The first term of $r_{(\psi_{(d,p,i)}(n),d,p)}$ can be simplified as follows:
\begingroup\small
\begin{align*}
&\sum_{j=10^{p-2}}^i\sum_{b=(10j+d)10^{n-p+1}}^{(10j+(d+1))10^{n-p+1}-1}\Big(1-\frac{(9j+d)10^{n-p+1}+10^{p-2}-10^{p-1}}{b+1-10^{p-1}}\Big)\\
&\quad=10^{n-p+1}(i-10^{p-2}+1)-\sum_{j=10^{p-2}}^i\big((9j+d)10^{n-p+1}+10^{p-2}-10^{p-1}\big)\\
&\quad\sum_{b=(10j+d)10^{n-p+1}}^{(10j+(d+1))10^{n-p+1}-1}\frac{1}{b+1-10^{p-1}}\\
&\underset{\substack{n \to +\infty}}{\sim}10^{n-p+1}(i-10^{p-2}+1)-\sum_{j=10^{p-2}}^i(9j+d)10^{n-p+1}\ln(\frac{10j+(d+1)}{10j+d})\,,
\end{align*}
\endgroup
thanks to inequalities \ref{inelog}.\\
The second term of $r_{(\psi_{(d,p,i)}(n),d,p)}$ can be simplified as follows:
\begin{align*}
&\sum_{j=10^{p-2}-1}^{i-1}\sum_{a=\max(10^{n},(10j+(d+1))10^{n-p+1})}^{(10(j+1)+d)10^{n-p+1}-1}\frac{10^{n-p+1}(j+1)-10^{p-2}}{a+1-10^{p-1}}\\
&\quad=\big(10^{n-p+1}10^{p-2}-10^{p-2}\big)\sum_{a=10^{n}}^{(10^{p-1}+d)10^{n-p+1}-1}\frac{1}{a+1-10^{p-1}}\\
&\quad+\big(10^{n-p+1}(j+1)-10^{p-2}\big)\sum_{j=10^{p-2}}^{i-1}\sum_{a=(10j+(d+1))10^{n-p+1}}^{(10(j+1)+d)10^{n-p+1}-1}\frac{1}{a+1-10^{p-1}}\\
&\underset{\substack{n \to +\infty}}{\sim}10^{n-1}\ln(\frac{10^{p-1}+d}{10^{p-1}})+\sum_{j=10^{p-2}}^{i-1}10^{n-p+1}(j+1)
\ln(\frac{10(j+1)+d}{10j+(d+1)})\,,
\end{align*}
thanks to inequalities \ref{inelog}.\\
Knowing that $P_{(d,10^n-1,p)}\underset{\substack{n \to +\infty}}{\sim}\alpha_{(d,p)}$ (see Proposition \ref{sub1i}), the result follows.
\end{proof}

Let us denote by $\alpha_{(d,p,i)}$ the limit of $(P_{(d,\psi_{(d,p,i)}(n),p)})_{n\in\mathbb N\setminus\llbracket0,p-1\rrbracket}$. Here is a few values 
of $P_{(d,\psi_{(d,p,i)}(n),p)}$:

\begin{table}[ht]
\centering
\begingroup\scriptsize
\begin{tabular}{|c||c|c|c|c||c|}
\hline
\rowcolor{gray!40}$d$&$P_{(d,\psi_{(d,2,7)}(2),2)}$&$P_{(d,\psi_{(d,2,7)}(3),2)}$&$P_{(d,\psi_{(d,2,7)}(4),2)}$&$P_{(d,\psi_{(d,2,7)}(5),2)}$
&$\alpha_{(d,2,7)}$\\\hline
$0$& $0.1182$ & $0.1152$ & $0.1148$ & $0.1148$ & $0.1148$  \\\hline
$1$& $0.1127$ & $0.1111$ & $0.1109$ & $0.1109$ & $0.1109$  \\\hline
$2$& $0.1082$ & $0.1074$ & $0.1073$ & $0.1073$ & $0.1073$  \\\hline
$3$& $0.1042$ & $0.1040$ & $0.1040$ & $0.1040$ & $0.1039$  \\\hline
$4$& $0.1006$ & $0.1008$ & $0.1008$ & $0.1008$ & $0.1008$ \\\hline
$5$& $0.0973$ & $0.0978$ & $0.0979$ & $0.0979$ & $0.0979$  \\\hline
$6$& $0.0942$ & $0.0950$ & $0.0951$ & $0.0951$ & $0.0951$  \\\hline
$7$& $0.0913$ & $0.0923$ & $0.0924$ & $0.0925$ & $0.0925$  \\\hline
$8$& $0.0886$ & $0.0898$ & $0.0899$ & $0.0900$ & $0.0900$  \\\hline
$9$& $0.0860$ & $0.0874$ & $0.0876$ & $0.0876$ & $0.0876$  \\\hline
\end{tabular}
\endgroup
\caption{Values of $P_{(d,\psi_{(d,2,7)}(n),2)}$ and $\alpha_{(d,2,7)}$, for $n\in\llbracket2,5\rrbracket$ and $i=7$. These values are rounded to the 
nearest ten-thousandth.}
\label{tab7}
\end{table}

\begin{table}[ht]
\centering
\begingroup\scriptsize
\begin{tabular}{|c||c|c|c||c|}
\hline
\rowcolor{gray!40}$d$&$P_{(d,\psi_{(d,3,23)}(3),3)}$&$P_{(d,\psi_{(d,3,23)}(4),3)}$&$P_{(d,\psi_{(d,3,23)}(5),3)}$&$\alpha_{(d,3,23)}$\\\hline
$0$& $0.1037$ & $0.1023$ & $0.1022$ &  $0.1021$ \\\hline
$1$& $0.1026$ & $0.1018$ & $0.1017$ &  $0.1017$ \\\hline
$2$& $0.1017$ & $0.1012$ & $0.1012$ &  $0.1012$ \\\hline
$3$& $0.1009$ & $0.1007$ & $0.1007$ &  $0.1007$ \\\hline
$4$& $0.1007$ & $0.1002$ & $0.1002$ &  $0.1002$ \\\hline
$5$& $0.0995$ & $0.0997$ & $0.0997$ &  $0.0997$ \\\hline
$6$& $0.0988$ & $0.0992$ & $0.0993$ &  $0.0993$ \\\hline
$7$& $0.0982$ & $0.0987$ & $0.0988$ &  $0.0988$ \\\hline
$8$& $0.0976$ & $0.0983$ & $0.0983$ &  $0.0983$ \\\hline
$9$& $0.0969$ & $0.0978$ & $0.0979$ &  $0.0979$ \\\hline
\end{tabular}
\endgroup
\caption{Values of $P_{(d,\psi_{(d,3,23)}(n),3)}$ and $\alpha_{(d,3,23)}$, for $n\in\llbracket3,5\rrbracket$ and $i=23$. These values are rounded to the 
nearest ten-thousandth.}
\label{tab8}
\end{table}

As a result, the sequence $(P_{(d,n,p)})_{n\in\mathbb N\setminus\llbracket0,10^{p-1}-1\rrbracket}$ does not converge. The $9\times10^{p-2}$ convergent 
subsequences confirm the remarks raised by Figures \ref{fig5} and \ref{fig6} about the existence of "pseudo cycles" in the graph of 
$(P_{(d,n,p)})_{n\in\mathbb N\setminus\llbracket0,10^{p-1}-1\rrbracket}$.

\subsection{Central values}

From Figures \ref{fig5} and \ref{fig6}, we notice that there exist fluctuations in the graph of $(P_{(d,n,p)})_
{n\in\mathbb N\setminus\llbracket0,10^{p-1}-1\rrbracket}$. We define $C_{(d,p)}$ as follows:

\begin{definition}
\begin{align*}
C_{(d,p)}=\frac{1}{9\times10^{p-2}}\sum_{i=10^{p-2}}^{10^{p-1}-1}\alpha_{(d,p,i)}\,.
\end{align*}
\end{definition}

Figure \ref{fig7} below shows the different values of $\alpha_{(0,2,i)}$, for $i\in\llbracket1,9\rrbracket$ and also the values of $P_{(0,n,2)}$ 
\textit{versus} $\log(n)$ for $n\in\llbracket10,2000000\rrbracket$:

\begin{figure}[ht]
\centering
\includegraphics[scale=0.75,clip=true]{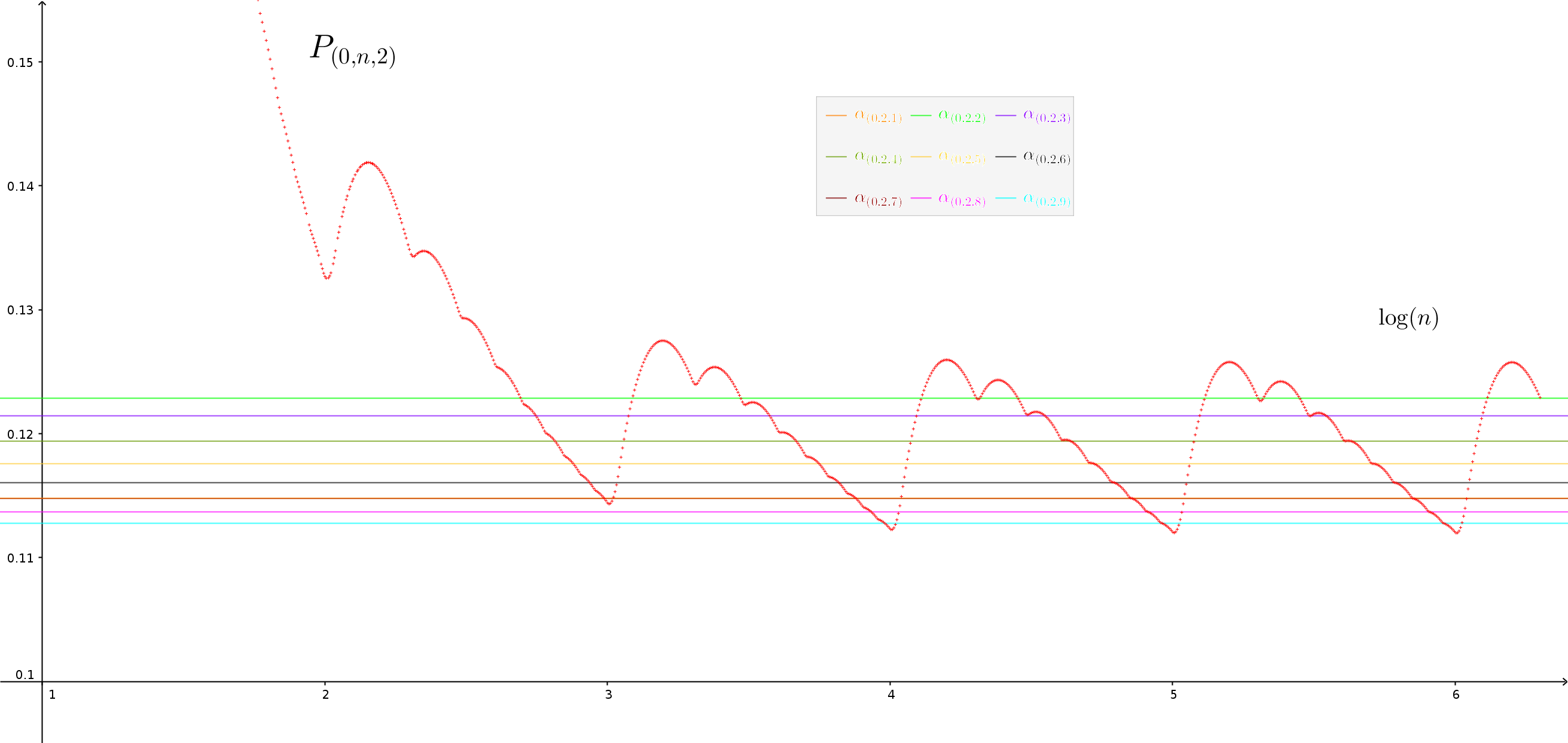}
\caption{Graph of $P_{(0,n,2)}$ \textit{versus} $\log(n)$. Note that points have not been all represented. 
Lines whose equation is $y=\alpha_{(0,2,i)}$, for $i\in\llbracket1,9\rrbracket$, have also been ploted. Note that those of equations 
$y=\alpha_{(0,2,1)}$ and $y=\alpha_{(0,2,7)}$ are almost coincident. We have $C_{(0,2)}\approx0.1170$.}\label{fig7}
\vspace{0.695\baselineskip}
\end{figure}

These means values are very close to the theoric value highlighted in \cite{hit} as can be seen in below tables (Tables \ref{tab9} and \ref{tab10}, 
where $p=2$ and $p=3$, respectively). According to Hill (\cite{hip}), it is absolutely normal. 

\begin{table}[ht]
\centering
\begin{tabular}{|c||c|c|}
\hline
\rowcolor{gray!40}$d$&$C_{(d,2)}$&$\sum_{j=1}^{9}\log(1+\frac{1}{10j+d})$\\\hline
$0$& $0.1170$ & $0.1197$  \\\hline
$1$& $0.1122$ & $0.1139$  \\\hline
$2$& $0.1079$ & $0.1088$  \\\hline
$3$& $0.1039$ & $0.1043$  \\\hline
$4$& $0.1001$ & $0.1003$  \\\hline
$5$& $0.0967$ & $0.0967$  \\\hline
$6$& $0.0935$ & $0.0934$  \\\hline
$7$& $0.0905$ & $0.0904$  \\\hline
$8$& $0.0878$ & $0.0876$  \\\hline
$9$& $0.0851$ & $0.0850$  \\\hline
\end{tabular}
\caption{Values of $C_{(d,p)}$ and probabilities associated to the second digit (\cite{hit}), for $p=2$. These values are 
rounded to the nearest thousandth.}
\label{tab9}
\end{table}

We furthermore note, thanks to Table \ref{tab9}, that $C_{(0,2)}$ slightly underestimates $\sum_{j=1}^{9}\log(1+\frac{1}{10j})$ as can be infered from 
Figure \ref{fig7}.

\begin{table}[ht]
\centering
\begin{tabular}{|c||c|c|}
\hline
\rowcolor{gray!40}$d$&$C_{(d,3)}$&$\sum_{j=10}^{99}\log(1+\frac{1}{10j+d})$\\\hline
$0$& $0.1016$ & $0.1018$  \\\hline
$1$& $0.1013$ & $0.1014$  \\\hline
$2$& $0.1009$ & $0.1010$  \\\hline
$3$& $0.1005$ & $0.1006$  \\\hline
$4$& $0.1002$ & $0.1002$  \\\hline
$5$& $0.0998$ & $0.0998$  \\\hline
$6$& $0.0994$ & $0.0994$  \\\hline
$7$& $0.0991$ & $0.0990$  \\\hline
$8$& $0.0987$ & $0.0986$  \\\hline
$9$& $0.0984$ & $0.0983$  \\\hline
\end{tabular}
\caption{Values of $C_{(d,p)}$ and probabilities associated to the third digit (\cite{hit}). These values are 
rounded to the nearest thousandth.}
\label{tab10}
\end{table}

\section*{Conclusion}

To conclude, through our model, we have seen that the proportion of $d$ as $p^{\text{th}}$ digit, $d\in\llbracket0,9\rrbracket$, in certain naturally 
occurring collections of data is more likely to follow a law whose probability distribution is $(d,P_{(d,n,p)})_{d\in\llbracket0,9\rrbracket}$, where 
$n$ is the smaller integer upper bound of the physical, biological or economical quantities considered, rather than the generalized Benford's Law. 
Knowing beforehand the value of the upper bound $n$ can be a way to find a better adjusted law than Benford's one. 

The results of the article would have been the same in terms of fluctuations of the proportion of $d\in\llbracket0,9\rrbracket$ as $p^{\text{th}}$ digit, of 
limits of subsequences, or of results on central values, if our discrete uniform distributions uniformly randomly selected were lower bounded by a 
positive integer different from $10^{p-1}$: first terms in proportion formulas become rapidly negligible. Through our model we understand that the 
predominance of $0$ as $p^{\text{th}}$ digit (followed by those of $1$ and so on) is all but surprising in experimental data: it is only due to the fact 
that, in the lexicographical order, $0$ appears before $1$, $1$ appears before $2$, \textit{etc.}

However the limits of our model rest on the assumption that the random variables used to obtain our data are not the same and follow discrete uniform 
distributions that are uniformly randomly selected. In certain naturally occurring collections of data it cannot conceivably be justified. Studying the 
cases where the random variables follow other distributions (and not necessarily randomly selected) sketch some avenues for future research on the 
subject.

\bibliography{bib}

\section*{Appendix: Python script}

Using Proposition \ref{proi}, we can determine the terms of $(P_{(d,n,p)})_{n\in\mathbb N\setminus\llbracket0,10^{p-1}-1\rrbracket}$, for 
$d\in\llbracket0,9\rrbracket$. To this end, we have created a script with the Python programming language (Python Software Foundation, Python Language 
Reference, version $3.4.$ available at \url{http://www.python.org}, see \cite{pyt}). The implemented function \textit{expvalProp} has three parameters: 
the rank $n$ of the wanted term of the sequence, the position $p$ of the considered digit and the value $d$ of this digit. Here is the used algorithm:
\vspace{0.5\baselineskip}

\begingroup\scriptsize
{\setlength{\parindent}{1.5em}\textit{def expvalProp(n,d,p):}}

{\setlength{\parindent}{3em}\textit{k=-1;}}

{\setlength{\parindent}{3em}\textit{while(10**(k+p+1)<=n):}}

{\setlength{\parindent}{4.5em}\textit{k=k+1}}

{\setlength{\parindent}{3em}\textit{l=math.floor((n-(10**(p-1)+d)*10**(k+1))/10**(k+2))+10**(p-2);S=0;T=0;}}

{\setlength{\parindent}{3em}\textit{if (k!=-1):}}

{\setlength{\parindent}{4.5em}\textit{for i in range(0,k+1):}}

{\setlength{\parindent}{6em}\textit{for j in range(10**(p-2),10**(p-1)):}}

{\setlength{\parindent}{7.5em}\textit{for b in range((10*j+d)*10**i,(10*j+(d+1))*10**i):}}

{\setlength{\parindent}{9em}\textit{T=T+(b-((9*j+d)*10**i+10**(p-2)-1))/(b+1-10**(p-1))}}

{\setlength{\parindent}{6em}\textit{for j in range(10**(p-2)-1,10**(p-1)):}}

{\setlength{\parindent}{7.5em}\textit{for a in range(max(10**(p+i-1),(10*j+(d+1))*10**i),min(10**(p+i),(10*}}

{\setlength{\parindent}{33em}\textit{(j+1)+d)*10**i)):}}

{\setlength{\parindent}{9em}\textit{S=S+((j+1)*10**i-10**(p-2))/(a+1-10**(p-1))}}

{\setlength{\parindent}{3em}\textit{if ((math.floor(n/10**(k+1))-10*math.floor(n/10**(k+2)))==d):}}

{\setlength{\parindent}{4.5em}\textit{for j in range(10**(p-2),l+1):}}

{\setlength{\parindent}{6em}\textit{for b in range((10*j+d)*10**(k+1),min(n,(10*j+(d+1))*10**(k+1)-1)+1):}}

{\setlength{\parindent}{7.5em}\textit{T=T+(b-((9*j+d)*10**(k+1)+10**(p-2)-1))/(b+1-10**(p-1))}}

{\setlength{\parindent}{4.5em}\textit{for j in range(10**(p-2)-1,l):}}

{\setlength{\parindent}{6em}\textit{for a in range(max(10**(p+k),(10*j+(d+1))*10**(k+1)),(10*(j+1)+d)*10**}}

{\setlength{\parindent}{38em}\textit{(k+1)):}}

{\setlength{\parindent}{7.5em}\textit{S=S+((j+1)*10**(k+1)-10**(p-2))/(a+1-10**(p-1))}}

{\setlength{\parindent}{3em}\textit{else:}}

{\setlength{\parindent}{4.5em}\textit{for j in range(10**(p-2),l+1):}}

{\setlength{\parindent}{6em}\textit{for b in range((10*j+d)*10**(k+1),(10*j+(d+1))*10**(k+1)):}}

{\setlength{\parindent}{7.5em}\textit{T=T+(b-((9*j+d)*10**(k+1)+10**(p-2)-1))/(b+1-10**(p-1))}}

{\setlength{\parindent}{4.5em}\textit{for j in range(10**(p-2)-1,l+1):}}

{\setlength{\parindent}{6em}\textit{for a in range(max(10**(p+k),(10*j+(d+1))*10**(k+1)),min(n,(10*(j+1)+d)}}

{\setlength{\parindent}{33em}\textit{*10**(k+1)-1)+1):}}

{\setlength{\parindent}{7.5em}\textit{S=S+((j+1)*10**(k+1)-10**(p-2))/(a+1-10**(p-1))}}

{\setlength{\parindent}{3em}\textit{return((S+T)/(n+1-10**(p-1)))}}

\endgroup
\end{document}